\title{Strictly Locally Testable and Resources Restricted Control Languages in Tree-Controlled Grammars}
\author{Bianca Truthe %\orcidID{0000-0003-0031-5275}
\institute{Institut f\"ur Informatik, Universit\"at Giessen, Arndtstr.~2, 35392 Giessen, Germany}
\email{bianca.truthe@informatik.uni-giessen.de}}
\def\Set#1#2{\left\{\: #1\;|\; #2\:\right\}}
\def\set#1#2{\{\; #1 \mid #2\;\}}
\def\sets#1{\{#1\}}
\newcommand{\CIRC}{\mathit{CIRC}}
\newcommand{\COMB}{\mathit{COMB}}
\newcommand{\COMM}{\mathit{COMM}}
\newcommand{\DEF}{\mathit{DEF}}
\newcommand{\FIN}{\mathit{FIN}}
\newcommand{\MON}{\mathit{MON}}
\newcommand{\NC}{\mathit{NC}}
\newcommand{\NIL}{\mathit{NIL}}
\newcommand{\ORD}{\mathit{ORD}}
\newcommand{\PS}{\mathit{PS}}
\newcommand{\REG}{\mathit{REG}}
\newcommand{\SUF}{\mathit{SUF}}
\newcommand{\UF}{\mathit{UF}}
\newcommand{\RL}{\mathit{RL}}
\newcommand{\SLT}{\mathit{SLT}}
\newcommand{\MAT}{\mathit{MAT}}
\newcommand{\fin}{\mathit{fin}}
\newcommand{\EOL}{\mathit{E0L}}
\newcommand{\ETOL}{\mathit{ET0L}}
\newcommand{\CF}{\mathit{CF}}
\newcommand{\CS}{\mathit{CS}}
\newcommand{\State}{\mathit{State}}
\newcommand{\Var}{\mathit{Var}}
\newcommand{\Prod}{\mathit{Prod}}
\def\cF{{\cal F}}
\def\cTC{{\cal TC}}
\def\slt#1#2#3#4{\textbf{\rm [}#1,#2,#3,#4\textbf{\rm ]}}
\def\Lra{\Longrightarrow}
\def\ra{\rightarrow}
\tikzstyle{to}=[->, >=stealth]
\tikzstyle{hier}=[->, >=angle 60]
\tikzstyle{hiero}=[->, >=angle 60, dashed]
\tikzstyle{state}=[circle,draw,inner sep=2pt,minimum size=8mm]
\tikzstyle{edgeLabel}=[inner sep=0.5mm,fill=white,text=black!60]%[inner sep=0.3mm,auto=left] %sloped
\newtheorem{theorem}{Theorem}[section]
\newtheorem{lemma}[theorem]{Lemma}
\newtheorem{example}[theorem]{Example}
\newenvironment{proof}{{\em Proof. }}{{}\hspace*{\fill}$\Box$ \par \medskip }
\newenvironment{proof*}{{\em Proof. }}{\par \medskip }
\newlength{\btlabelwidth}\setlength{\btlabelwidth}{\labelwidth}
\newlength{\btleftmargin}\setlength{\btleftmargin}{\leftmargin}
\begin{document}
\maketitle
\begin{abstract}
Tree-controlled grammars are context-free grammars where the derivation process is controlled
in such a way that every word on a level of the derivation tree 
must belong to a certain control language.
We investigate the generative capacity of such tree-controlled grammars where the control 
languages are special regular sets, especially strictly locally testable languages or
languages restricted by resources of the generation (number of non-terminal symbols or
production rules) or acceptance (number of states).
Furthermore, the set theoretic inclusion relations of these subregular language families 
themselves are studied.
\end{abstract}
%Keywords: Tree-controlled grammars, subregular control languages, computational capacity,
%strictly locally testable languages, resources restricted languages.

\section{Introduction}

In the monograph \cite{DasPau89} by J{\"{u}}rgen~Dassow and Gheorghe~P{\u{a}}un, \emph{Seven
Circumstances Where Context-Free Grammars Are Not Enough} are presented. A
possibility to enlarge the generative power of context-free grammars is to introduce
some regulation mechanism which controls the derivation in a context-free grammar.
In some cases, regular languages are used for such a regulation. They are rather easy
to handle and, used as control, they often lead to context-sensitive or even recursively
enumerable languages while the core grammar is only context-free.

One such control mechanism was introduced by Karel~{\v{C}}ulik~II and Hermann~A.~Maurer 
in \cite{CulMau77} where the structure of derivation trees of context-free grammars is restricted by the
requirement that the words of all levels of the derivation tree must
belong to a given regular (control) language. This model is called tree-controlled grammar.

Gheorghe~P{\u{a}}un proved that the generative capacity of such grammars coincides with 
that of context-sensitive grammars (if no erasing rules are used) or
arbitrary phrase structure grammars (if erasing rules are used). Thus, the question arose to 
what extend the restrictions can be weakened in order to obtain `useful' families of languages
which are located somewhere between the classes of context-free and context-sensitive languages.

In \cite{DasStiTru-dcfs08-tcs,DasStiTru-afl08-ijfcs,DasTru-dcfs08,DasTru-afl08,Sti08-collMitrana,TurDasManSel12,Vasz12},
many subregular families of languages have been investigated as classes for the control languages.
In this paper, we continue this research with further subregular language families, especially
strictly locally testable languages or languages restricted by resources of the generation 
(number of non-terminal symbols or production rules) or acceptance (number of states).
Furthermore, the set theoretic inclusion relations of these subregular language families 
themselves are studied.

\section{Preliminaries}
Throughout the paper, we assume that the reader is familiar with the basic concepts of the 
theory of automata and formal languages. 
For details, we refer to~\cite{handbook}. Here we only recall some notation and the definition 
of contextual grammars with selection which form the central notion of the paper.

\subsection{Languages, grammars, automata}

Given an alphabet $V$, we denote by $V^*$ and $V^+$ the set of all words and the set of all non-empty words over $V$,
respectively. The empty word is denoted by~$\lambda$. By $V^k$,
and $V^{{}\leq k}$ for some natural number $k$,
we denote the set of all words of the alphabet $V$ with exactly $k$ letters
and the set of all words over $V$ with at most $k$ letters, respectively.
For a word $w$ and a letter $a$, we denote the length of $w$ by $|w|$ and the number of occurrences of the letter $a$ in the
word~$w$ by $|w|_a$. For a set $A$, we denote its cardinality by $|A|$.

A right-linear grammar is a quadruple $G=(N,T,P,S)$ where $N$ is a finite set of non-terminal symbols,~$T$ is a finite set of
terminal symbols, $P$ is a finite set of production rules of the form $A\ra wB$ or~$A\ra w$ with $A,B\in N$ and~$w\in T^*$,
and $S\in N$ is the start symbol. Such a grammar is called regular, if all the rules are of the form $A\ra xB$
or $A\ra x$ with $A,B\in N$ and $x\in T$ or $S\ra\lambda$. The language generated by a right-linear or regular grammar
is the set of all words over the terminal alphabet which are obtained from the start symbol $S$ by a successive replacement
of the non-terminal symbols according to the rules in the set $P$. 
A non-terminal symbol $A$ is replaced by the right-hand side $w$ of a rule $A\ra w\in P$ in order 
to derive the next sentential form. The language generated consists of all sentential forms 
without a non-terminal symbol.
Every language generated by a right-linear grammar
can also be generated by a regular grammar.

A deterministic finite automaton is a quintuple $A=(V,Z,z_0,F,\delta)$ where $V$ is a finite set of input symbols, $Z$
is a finite set of states, $z_0\in Z$ is the initial state, $F\subseteq Z$ is a set of accepting states, and $\delta$ is
a transition function $\delta: Z\times V\to Z$. The language accepted by such an automaton is the set of all input words 
over the alphabet $V$ which lead letterwise by the transition function from the initial state to an accepting state.

A regular expression over an alphabet $V$ is defined inductively as follows:
\begin{enumerate}
\item $\emptyset$ is a regular expression;
\item every element $x\in V$ is a regular expression;
\item if $R$ and $S$ are regular expressions, so are the concatenation $R\cdot S$,
the union~$R\cup S$, and the Kleene closure $R^*$;
\item for every regular expression, there is a natural number $n$ such that the 
regular expression is obtained from the atomic elements $\emptyset$ and $x\in V$
by $n$ operations concatenation, union, or star.
\end{enumerate}

The language $L(R)$ which is described by a regular expression $R$ is also 
inductively defined:
$L(\emptyset)=\emptyset$; $L(x)=\sets{x}$ for each $x\in V$;
and $L(R\cdot S) = L(R)\cdot L(S)$, $L(R\cup S) = L(R)\cup L(S)$, and $L(R^*) = (L(R))^*$ for
regular expressions $R$ and $S$.

The set of all languages generated by some right-linear grammar coincides with the set of all languages accepted by a
deterministic finite automaton and with the set of all languages described by a regular expression. All these languages 
are called regular and form a family denoted by $\REG$. Any subfamily
of this set is called a subregular language family.

A context-free grammar is a quadruple $G=(N,T,P,S)$ where 
%$N$ is a finite set of non-terminal symbols, $T$ is a finite set of terminal symbols, 
%$P$ is a finite set of production rules of the form $A\ra w$ with~$A\in N$ and~$w\in (N\cup T)^*$,
%and $S\in N$ is the start symbol. 
$N$, $T$, and $S$ are as in a right-linear grammar but
the production rules in the set $P$ are of the form $A\ra w$ with~$A\in N$ and~$w\in (N\cup T)^*$.

The language generated by a context-free grammar is the set of all words over the 
terminal alphabet which are obtained from the start symbol $S$ by replacing sequentially the non-terminal symbols 
according to the rules in the set $P$. A language is called context-free if it is generated by some context-free grammar.
The family of all context-free languages is denoted by $\CF$.

With a derivation of a terminal word by a context-free grammar, we associate a derivation tree which has the start symbol
in its root and where every node with a non-terminal $A\in N$ has as children nodes with symbols which form, read from left to
right, a word $w$ such that $A\ra w$ is a rule of the grammar (if~$A\ra\lambda$, then the node with $A$ has only one
child node and this is labelled with $\lambda$). Nodes with terminal symbols or $\lambda$ have no children.
With any derivation tree $t$ of height $k$ and any number~$0\leq j\leq k$, we associate the word of level $j$ 
and the sentential form of level $j$ which are given by all nodes of depth~$j$ read from left to right and all nodes
of depth~$j$ and all leaves of depth less than $j$ read from left to right, respectively.
Obviously, if two words $w$ and $v$ are sentential forms of two successive levels, then $w\Lra^* v$ holds
and this derivation is obtained by a parallel replacement of all non-terminal symbols occurring in the word $w$.

A context-sensitive grammar is a quadruple $G=(N,T,P,S)$ where $N$ is a finite set of non-terminal symbols, 
$S\in N$ is the start symbol, $T$ is a finite set of terminal symbols, and $P$ is a finite set of production 
rules of the form $\alpha\ra\beta$ with $\alpha\in (N\cup T)^+\setminus T^*$, $\beta\in (N\cup T)^*$, 
and $|\beta|\geq|\alpha|$ with the only exception that $S\ra\lambda$ is allowed if the sysmbol $S$ does not occur
on any right-hand side of a rule. The language generated by a context-sensitive grammar is the set of all words over the 
terminal alphabet which are obtained from the start symbol $S$ by replacing sequentially subwords
according to the rules in the set $P$. A language is called context-sensitive if it is generated by some context-sensitive
grammar. The family of all context-sensitive languages is denoted by $\CS$.
For every context-sensitive language $L$, there is a context-sensitive grammar $G=(N,T,P,S)$ with $L(G)=L$, 
where all rules in $P$ are of the form 
\[AB\ra CD,\ A\ra BC,\ A\ra B, \mbox{ or } A\ra a\]
with $A,B,C,D\in N$ and $a\in T$, or $S\ra\lambda$ if $S$ does not occur on the right-hand side of a rule.
Such a grammar is said to be in Kuroda normal form (\cite{Kur64}).

We also mention here four classes of languages without a definition since they are mentioned only in the summary of existing
results: By $\MAT$, we denote the family of all languages generated by 
matrix grammars with appearance checking and without erasing rules; 
by $\MAT_{\fin}$, we denote the family of all such languages where the matrix grammar is of finite index 
(\cite{DasPau89}, \cite{handbook}).
By $\EOL$ ($\ETOL$), we denote the family of all languages generated by
extended (tabled) interactionless Lindenmayer systems (\cite{RS-Lsystems}).

\subsection{Complexity measures and resources restricted languages}

Let $G=(N,T,P,S)$ be a right-linear grammar, $A=(V,Z,z_0,F,\delta)$ be a deterministic finite
automaton, and $L$ be a regular language. Then, we recall the following complexity measures from \cite{DasManTru11b}:
\begin{align*}
\State(A) &= |Z|, \Var(G) = |N|, \Prod(G) = |P|, \\
\State(L)&= \min \Set{\State(A)}{ A \mbox{ is a det. finite automaton accepting }L }, \\
\Var_\RL(L)  &= \min \Set{\Var(G)  }{ G \mbox{ is a right-linear grammar generating }L }, \\
\Prod_\RL(L) &= \min \Set{\Prod(G) }{ G \mbox{ is a right-linear grammar generating }L }.
\end{align*}
We now define subregular families by restricting the resources needed
for generating or accepting their elements:
\begin{align*}
\RL_n^V &= \Set{ L }{ L\in\REG \mbox{ with } \Var_\RL(L)\leq n },\\
\RL_n^P &= \Set{ L }{ L\in\REG \mbox{ with } \Prod_\RL(L)\leq n },\\
\REG_n^Z &= \Set{ L }{ L\in\REG \mbox{ with } \State(L)\leq n }.
\end{align*}

\subsection{Subregular language families based on the structure}

We consider the following restrictions for regular languages. Let $L$ be a language \pagebreak
over an alphabet $V$. 
%We say that the language $L$ -- with respect to the alphabet~$V$~--~is
With respect to the alphabet $V$, the language $L$ is said to be

\begin{itemize}
\item \emph{monoidal} if and only if $L=V^*$,
\item \emph{nilpotent} if and only if it is finite or its complement $V^*\setminus L$ is finite,
\item \emph{combinational} if and only if it has the form
$L=V^*X$
for some subset $X\subseteq V$,
\item \emph{definite} if and only if it can be represented in the form
$L=A\cup V^*B$
where~$A$ and~$B$ are finite subsets of $V^*$,
\item \emph{suffix-closed} (or \emph{fully initial} or \emph{multiple-entry} language) if
and only if, for any two words~$x\in V^*$ and~$y\in V^*$, the relation $xy\in L$ implies
the relation~$y\in L$,
\item \emph{ordered} if and only if the language is accepted by some deterministic finite
automaton
\[A=(V,Z,z_0,F,\delta)\]
with an input alphabet $V$, a finite set $Z$ of states, a start state $z_0\in Z$, a set $F\subseteq Z$ of
accepting states and a transition mapping $\delta$ where $(Z,\preceq )$ is a totally ordered set and, for
any input symbol~$a\in V$, the relation $z\preceq z'$ implies $\delta (z,a)\preceq \delta (z',a)$,
\item \emph{commutative} if and only if it contains with each word also all permutations of this
word,
\item \emph{circular} if and only if it contains with each word also all circular shifts of this
word,
\item \emph{non-counting} (or \emph{star-free}) if and only if there is a natural
number $k\geq 1$ such that, for every three words $x\in V^*$, $y\in V^*$, and $z\in V^*$, it 
holds~$xy^kz\in L$ if and only if $xy^{k+1}z\in L$,
%\item \emph{quasi-power-separating} if and only if, for any $x\in V^*$, there is a natural number
%$m\geq 1$ such that either $J_x^m \cap L = \emptyset$
%or $J_x^m\subseteq L$ where $J_x^m = \{ x^n\mid n\geq m\}$,
\item \emph{power-separating} if and only if, there is a natural number $m\geq 1$ such that
for every word~$x\in V^*$, either
$J_x^m \cap L = \emptyset$
or
$J_x^m\subseteq L$
where
$J_x^m = \set{ x^n}{n\geq m}$,
\item \emph{union-free} if and only if $L$ can be described by a regular expression which
is only built by product and star,
\item \emph{strictly locally $k$-testable} if and only if there are three subsets $B$, $I$, and $E$ of $V^k$
such that any word~$a_1a_2\ldots a_n$ with $n\geq k$ and $a_i\in V$ for $1\leq i\leq n$ belongs to the language $L$
if and only if
\begin{gather*}
a_1a_2\ldots a_k\in B,\\
a_{j+1}a_{j+2}\ldots a_{j+k}\in I \text{ for every $j$ with $1\leq j\leq n-k-1$ and}\\
a_{n-k+1}a_{n-k+2}\ldots a_n\in E,
\end{gather*}
\item \emph{strictly locally testable} if and only if it is strictly locally $k$-testable for some natural number $k$.
\end{itemize}

We remark that monoidal, nilpotent, combinational, definite, ordered, union-free, and strictly locally~($k$-)testable
languages are regular, whereas non-regular languages of the other types mentioned above exist.
Here, we consider among the commutative, circular, suffix-closed, non-counting,
and power-separating languages only those which are also regular.

Some properties of the languages of the classes mentioned above can be found in
\cite{Shyr91} (monoids),
\cite{GecsegPeak72} (nilpotent languages),
\cite{Ha69} (combinational and commutative languages),
\cite{PerRabSham63} (definite languages),
\cite{GilKou74} and \cite{BrzoJirZou14} (suffix-closed languages),
\cite{ShyThi74-ORD} (ordered languages),
\cite{Das79} (circular languages),
\cite{McNPap71} (non-counting and strictly locally testable languages),
\cite{ShyThi74-PS} (power-separating languages),
\cite{Brzo62} (union-free languages).

By $\FIN$, $\MON$, $\NIL$, $\COMB$, $\DEF$, $\SUF$, $\ORD$, $\COMM$, $\CIRC$, $\NC$, $\PS$, $\UF$, 
$\SLT_k$ (for any natural number $k\geq 1$), and $\SLT$, 
we denote the families of all finite, 
monoidal, nilpotent, combinational, definite, regular suffix-closed, ordered, regular commutative, regular circular, 
regular non-counting, regular power-separating, union-free, 
strictly locally $k$-testable, and strictly locally testable languages, respectively.

For any natural number $n\geq 1$, let $\MON_n$ be the set of all languages that
can be represented in the form $A_1^*\cup A_2^*\cup\cdots\cup A_k^*$
with $1\leq k\leq n$ where all $A_i$ ($1\leq i\leq k$) are alphabets. Obviously,
\[\MON=\MON_1\subset\MON_2\subset\dots\subset\MON_j\subset \cdots.\]

A strictly locally testable language characterized by three finite sets $B$, $I$, and~$E$ as above which includes
additionally a finite set $F$ of words which are shorter than those of the sets $B$, $I$, and~$E$
is denoted by $\slt{B}{I}{E}{F}$.

As the set of all families under consideration, we set
\begin{align*}
\mathfrak{F} &= \sets{ \FIN, \NIL, \COMB, \DEF, \SUF, \ORD, \COMM, \CIRC, \NC, \PS, \UF}\\
    &\quad{}\cup\set{\MON_k}{k\geq 1}\cup\sets{\SLT}\cup\set{\SLT_k}{k\geq 1}\\
    &\quad{}\cup\set{ \RL_n^V}{n\geq 1}\cup\set{\RL_n^P}{n\geq 1}\cup\set{\REG_n^Z}{n\geq 1}.
\end{align*}

\subsection{Hierarchy of subregular families of languages}

In this section, we present a hierarchy of the families of the aforementioned set $\mathfrak{F}$ with respect to the
set theoretic inclusion relation. A summary is depicted in Figure~\ref{fig-subreg-hier}.

Before this, we prove some relations of the classes of strictly locally $k$-testable languages 
to the subregular language families restricted by resources, which have not been 
considered in the literature yet. 

For this purpose, we first introduce some languages which serve later as witness languages for proper
inclusions and incomparabilities.

\begin{lemma}\label{l-l1}
The language $L_1=\{a\}^*\{b\}\{a,b\}^*$ belongs to $\REG_2^Z\setminus\SLT$.
\end{lemma}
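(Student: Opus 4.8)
The plan is to establish the two memberships separately: first that $L_1 \in \REG_2^Z$, and second that $L_1 \notin \SLT$.

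For the first part, I would exhibit a deterministic finite automaton with exactly two states accepting $L_1 = \{a\}^*\{b\}\{a,b\}^*$. Take $A = (\{a,b\}, \{z_0, z_1\}, z_0, \{z_1\}, \delta)$ with $\delta(z_0,a)=z_0$, $\delta(z_0,b)=z_1$, $\delta(z_1,a)=z_1$, and $\delta(z_1,b)=z_1$. Intuitively, the automaton stays in the non-accepting state $z_0$ while reading the initial block of $a$'s, moves to the accepting state $z_1$ on the first $b$, and then remains in $z_1$ forever. A short verification shows that a word is accepted if and only if it contains at least one $b$ and all letters before the first $b$ are $a$'s, which is exactly the description of $L_1$. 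Hence $\State(L_1) \leq 2$, so $L_1 \in \REG_2^Z$. (One can also note $\State(L_1) = 2$ since $L_1$ is infinite and not of the form $V^*$, but only the upper bound is needed.)

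For the second part, I would argue by contradiction. Suppose $L_1 \in \SLT$, so $L_1$ is strictly locally $k$-testable for some $k \geq 1$, witnessed by sets $B, I, E \subseteq \{a,b\}^k$ (and possibly a finite exceptional set, which is irrelevant for long words). Consider the word $w = a^k b a^k \in L_1$: all its length-$k$ factors are either $a^k$, or a factor of the form $a^i b a^{k-1-i}$ for some $0 \le i \le k-1$; since $w \in L_1$ and $|w| = 2k+1 \geq k$, the prefix $a^k$ lies in $B$, the suffix $a^k$ lies in $E$, and all the ``mixed'' factors $a^i b a^{k-1-i}$ lie in $I$. Now form $w' = a^k b a^k b a^k$. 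Its length-$k$ prefix is $a^k \in B$, its length-$k$ suffix is $a^k \in E$, and every length-$k$ factor is again either $a^k \in I$ (note $a^k \in I$ as well, read off from the interior of $w$, provided $k \geq 1$ — I should double check this by using a slightly longer starting word such as $a^k b a^{2k}$ if needed) or of the form $a^i b a^{k-1-i} \in I$. Therefore the strict local testability condition forces $w' \in L_1$. But $w' = a^k b a^k b a^k$ starts with $a^k b$, so it does lie in $L_1$ — this particular $w'$ is a bad choice. The fix is to instead choose a witness word \emph{not} in $L_1$ whose factors are all legal: take $v = b a^k \in L_1$ to harvest $B$-membership of $ba^{k-1}$ and $I$-membership of $a^i$-type factors, then build $v' = a^k b a^k$ which \emph{is} in $L_1$; the genuine contradiction comes from noting that $L_1$ is not suffix-closed in a way $\SLT$ languages essentially are — more precisely, the standard fact that strictly locally testable languages cannot distinguish a word from a ``pumped'' version agreeing on all $k$-factors, endpoints included.

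The cleanest contradiction: pick the word $x = a^{k} b \in L_1$ and the word $y = b a^{k} b \notin L_1$? No — $ba^kb \in L_1$. The real obstruction is that $L_1$ requires \emph{all letters before the first $b$} to be $a$, a non-local condition. So I take $u_1 = b a^k \in L_1$ whose prefix-factor is $b a^{k-1}$, hence $b a^{k-1} \in B$; and I take $u_2 = a^k b a^k \in L_1$ whose prefix-factor is $a^k$, hence $a^k \in B$ as well. Then the word $b a^{k-1} \cdot a \cdots$, carefully spliced, yields a word whose $k$-prefix is $ba^{k-1} \in B$ and whose every other $k$-factor lies in $I \cup E$ as harvested from $a^k b a^{2k}$, but which contains a $b$ that is not preceded solely by $a$'s — concretely a word like $a^k b a^{k-1} b a^k$, which is in $L_1$. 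I keep landing inside $L_1$ because $L_1$ is actually ``upward closed'' under inserting $b$'s after the first $b$. The correct target is a word \emph{outside} $L_1$: the word $b a^k b a^k$ is in $L_1$; the word I want is one beginning with a non-$a$ non-$b$ — impossible over $\{a,b\}$. Thus the honest approach is the known characterization: $L_1 \notin \SLT$ because membership in $L_1$ depends on an unbounded prefix (is there a $b$ yet, and were all prior symbols $a$?), which no finite-window test can capture; formally, for each $k$ the words $a^k b a^k$ and, after the analysis above, one shows $B$ must contain both $a^k$ and some $b a^{k-1}$, and $I$ must contain all of $\{a^k\} \cup \{a^i b a^{k-1-i} : 0 \le i < k\}$, whence the word $(ba)^k$-style constructions force into $L_1$ a word that starts with, say, $b$ followed later by an $a$ followed by a $b$ — which is still in $L_1$. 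I expect the main obstacle is precisely pinning down the right witness, and I would resolve it by the classical result that $L_1$, being the canonical language ``first occurrence of $b$ after arbitrarily many $a$'s'', is not even locally testable, proven via the syntactic-monoid or pumping characterization of $\SLT$ rather than by ad hoc word surgery.
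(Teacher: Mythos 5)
Your first half is fine and is exactly the paper's argument: the two-state automaton you describe accepts $L_1$, so $L_1\in\REG_2^Z$. The second half, however, never reaches a contradiction, and the missing idea is small but essential. After you have harvested $a^k\in B$ and $a^k\in E$, and (as you correctly suspected, using a longer word such as $a^kba^{2k}$ or, as in the paper, $a^{2k}ba^{2k}$, so that $a^k$ also occurs as an \emph{interior} factor) $a^k\in I$, the word you must then test against the sets $B$, $I$, $E$ is one containing \emph{no} $b$ at all, namely $a^{2k}$: its length-$k$ prefix is $a^k\in B$, its length-$k$ suffix is $a^k\in E$, and every interior length-$k$ factor is $a^k\in I$, so strict local $k$-testability forces $a^{2k}\in L_1$, contradicting the fact that every word of $L_1$ contains a $b$. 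That single observation closes the proof; it is precisely the paper's punchline.

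Instead, every candidate witness you try ($a^kba^kba^k$, $ba^kba^k$, words obtained by splicing in extra $b$'s, and so on) contains a $b$, and $L_1$ is closed under inserting further $b$'s after the first one, so all such words necessarily lie in $L_1$ --- that is why you ``keep landing inside $L_1$.'' Your fallback at the end is also not sound as stated: $L_1$ is the set of all words over $\{a,b\}$ containing at least one $b$, and this language \emph{is} locally testable in the general sense (membership depends only on whether $b$ occurs as a factor); it merely fails to be \emph{strictly} locally testable. So the appeal to ``$L_1$ is not even locally testable, by the classical characterization'' both misstates the fact and leaves the claim you actually need unproved. With the all-$a$ witness $a^{2k}$ inserted at the point where you harvested $a^k\in B\cap I\cap E$, your argument becomes correct and coincides with the paper's proof.
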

\begin{proof}
The language $L_1$ is accepted by the automaton with two states whose transition function is 
given in the following diagram (double-circled states are accepting):%\vspace{-5mm}

\begin{center}
\begin{tikzpicture}[on grid,>=stealth',initial text={\sf start}%,initial where=above%,
%background rectangle/.style={draw=yellow!80,fill=yellow!20,rounded corners=1ex},
%show background rectangle
]
%\gitter
%\node[state,minimum size=4mm] (z_0) at (0,0) {$z_0$};
\node[state,minimum size=4mm,initial] (z_0) at (0,0) {$z_0$};
\node[state,minimum size=4mm,accepting] (z_1) at (2,0) {$z_1$};
\draw[->] (z_0) edge node [above,sloped] {$b$} (z_1);
\draw[->] (z_0) .. controls +(75:1) and +(105:1) .. node [left] {$a$} (z_0);
\draw[->] (z_1) .. controls +(75:1) and +(105:1) .. node [right] {$a,b$} (z_1);
\end{tikzpicture}
\end{center}

Suppose, the language $L_1$ is strictly locally $k$-testable for some natural number $k\geq 1$.
Then, there exist sets $B\subseteq V^k$, $I\subseteq V^k$, $E\subseteq V^k$, and $F\subseteq V^{{}\leq k-1}$
such that $L_1=\slt{B}{I}{E}{F}$.
Since the word $a^{2k}ba^{2k}$ belongs to the language $L_1$, we know that $a^k\in B\cap I\cap E$. But
then, also the word $a^{2k}$ belongs to the language which is a contradiction.
\end{proof}

\begin{lemma}\label{l-l2}
The language $L_2=\slt{\{a,b\}}{\{b,c\}}{\{a,c\}}{\emptyset}$ belongs to $\SLT_1\setminus\REG_4^Z$.
\end{lemma}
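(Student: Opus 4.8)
The plan is to check the two halves of the claim separately. That $L_2\in\SLT_1$ needs no work: the language is already exhibited in the form $\slt{B}{I}{E}{F}$ with $B=\{a,b\}$, $I=\{b,c\}$, $E=\{a,c\}$, all subsets of $V^1$ for the alphabet $V=\{a,b,c\}$, and $F=\emptyset\subseteq V^{\leq 0}$; so by the definition of strictly locally $1$-testable languages, $L_2\in\SLT_1$. The substance of the lemma is the statement $L_2\notin\REG_4^Z$, that is, $\State(L_2)\geq 5$: no deterministic finite automaton with four states accepts $L_2$.

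First I would make membership in $L_2$ explicit. Since $k=1$ and $F=\emptyset$, we have $\lambda\notin L_2$; a one-letter word belongs to $L_2$ if and only if that letter lies in $B\cap E=\{a\}$; and a word $a_1a_2\cdots a_n$ with $n\geq 2$ belongs to $L_2$ if and only if $a_1\in\{a,b\}$, $a_n\in\{a,c\}$, and $a_i\in\{b,c\}$ for every $i$ with $2\leq i\leq n-1$. In particular, $a$ and $aa$ lie in $L_2$, whereas $\lambda$, $b$, and $c$ do not. Now it suffices, by the Myhill--Nerode lower bound on the size of a deterministic finite automaton, to exhibit five words that are pairwise distinguishable with respect to $L_2$, i.e. such that for any two of them some continuation $v$ puts exactly one of the two concatenations into $L_2$.

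I would use the five words $\lambda$, $a$, $b$, $c$, $aa$. Since $a,aa\in L_2$ and $\lambda,b,c\notin L_2$, the empty continuation already separates every pair consisting of one accepted and one rejected word (six of the ten pairs). The continuation $a$ separates $c$ from both $\lambda$ and $b$, because $\lambda a=a\in L_2$ and $ba\in L_2$ but $ca\notin L_2$ (wrong first letter). The continuation $aa$ separates $\lambda$ from $b$, because $aa\in L_2$ but $baa\notin L_2$. Finally the continuation $c$ separates $a$ from $aa$, because $ac\in L_2$ but $aac\notin L_2$. All ten pairs being separated, $\State(L_2)\geq 5$, hence $L_2\notin\REG_4^Z$; together with $L_2\in\SLT_1$ this establishes the lemma.

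I do not expect a real obstacle: everything reduces to routine checking of the three defining conditions for $B$, $I$, $E$ on a handful of short words. The single point that must not be overlooked is that extending a word moves its former last letter into an interior position, where the allowed set shrinks from $\{a,c\}$ to $\{b,c\}$ --- this is precisely what ejects $baa$ and $aac$ from $L_2$ and supplies the separating continuations; beyond that it is just a small explicit minimal-automaton count.
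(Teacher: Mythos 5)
Your proof is correct and follows essentially the same route as the paper: membership in $\SLT_1$ is immediate from the given representation, and non-membership in $\REG_4^Z$ is shown via the Myhill--Nerode lower bound using exactly the same five words $\lambda$, $a$, $b$, $c$, $aa$ (the paper leaves the pairwise distinguishability ``as one can check,'' while you supply the explicit separating continuations, all of which are verified correctly).
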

\begin{proof}
By definition, $L_2\in\SLT_1$.

We now prove that $L_2$ is not accepted by an deterministic finite automaton with less than five states.
Let $L=L_2$ and let $R_L$ be the Myhill-Nerode equivalence relation (see \cite{HopUll79}): 
two words $x$ and $y$ are in this relation if and only if,
for all words $z$, either both words $xz$ and $yz$ belong to the language $L$ or none of them.
The words $\lambda$, $a$, $b$, $c$, and $aa$ are pairwise not in this relation, as one can check.
%can be seen as follows:
%\begin{align*}
%(\lambda,a)\notin R_L \mbox{ since } \lambda\notin L \mbox{ but }a\in L,\\
%(\lambda,b)\notin R_L \mbox{ since } \lambda c\notin L \mbox{ but }bc\in L,\\
%(\lambda,c)\notin R_L \mbox{ since } \lambda a\in L \mbox{ but }ca\notin L,\\
%(\lambda,aa)\notin R_L \mbox{ since } \lambda\notin L \mbox{ but }aa\in L,\\
%(a,b)\notin R_L \mbox{ since } a\in L \mbox{ but }b\notin L,\\
%(a,c)\notin R_L \mbox{ since } a\in L \mbox{ but }c\notin L,\\
%(a,aa)\notin R_L \mbox{ since } ac\in L \mbox{ but }aac\notin L,\\
%(b,c)\notin R_L \mbox{ since } ba\in L \mbox{ but }ca\notin L,\\
%(b,aa)\notin R_L \mbox{ since } b\notin L \mbox{ but }aa\in L,\\
%(c,aa)\notin R_L \mbox{ since } c\notin L \mbox{ but }aa\in L.
%\end{align*}

Therefore, the index of the language $L$ is at least five. Hence, at least five states are necessary for
accepting the language $L$.
\end{proof}

\begin{lemma}\label{l-l3}
For each natural number $n\geq 2$, let $V_n=\{a_1,a_2,\ldots,a_{n-1}\}$ be an alphabet with $n-1$ pairwise
different letters and let $L_{3,n}=\{a_1a_2\ldots a_{n-1}\}$.
Then, every language $L_{3,n}$ for $n\geq 2$ belongs to the set~$\SLT_2\setminus\REG_n^Z$.
\end{lemma}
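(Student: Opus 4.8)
The plan is to check the two claimed memberships separately: $L_{3,n}\in\SLT_2$ on the one hand, and $L_{3,n}\notin\REG_n^Z$ on the other, the latter because every deterministic finite automaton accepting $L_{3,n}$ has at least $n+1$ states.

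For $L_{3,n}\in\SLT_2$ I would first settle the degenerate case $n=2$ on its own: here $L_{3,2}=\{a_1\}$ consists of a single letter, which is shorter than any word of length~$2$, so $L_{3,2}=\slt{\emptyset}{\emptyset}{\emptyset}{\{a_1\}}$. For $n\geq 3$ the word $w=a_1a_2\cdots a_{n-1}$ has length at least~$2$, and I would take $B=\{a_1a_2\}$, $E=\{a_{n-2}a_{n-1}\}$, $I=\set{a_ia_{i+1}}{1\leq i\leq n-2}$ and $F=\emptyset$, claiming $L_{3,n}=\slt{B}{I}{E}{\emptyset}$. That $w$ itself is accepted is clear, since $B$, $E$, and $I$ consist exactly of the length-$2$ factors of $w$. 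The converse --- that $w$ is the \emph{only} accepted word --- is where the hypothesis that $a_1,\dots,a_{n-1}$ are pairwise distinct is essential. In an accepted word $v=b_1b_2\cdots b_N$ (necessarily $N\geq 2$ since $F=\emptyset$) the condition $b_1b_2\in B$ forces $b_1=a_1$ and $b_2=a_2$; then, since each letter $a_i$ with $1\leq i\leq n-2$ is the first component of exactly one pair of $I\cup E$, namely $a_ia_{i+1}$, the letters $b_3,b_4,\dots$ are determined one after another as $a_3,a_4,\dots$; as $a_{n-1}$ is the first component of no pair of $I\cup E$, the word cannot continue past $a_{n-1}$, while the requirement $b_{N-1}b_N\in E=\{a_{n-2}a_{n-1}\}$ forces $b_{N-1}=a_{n-2}$ and hence the word to reach exactly $a_{n-1}$; so $v=w$.

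For $L_{3,n}\notin\REG_n^Z$ I would argue, exactly as in the proof of Lemma~\ref{l-l2}, by bounding the index of the Myhill--Nerode relation $R_L$ with $L=L_{3,n}$ from below. I claim that the $n$ prefixes $\lambda,a_1,a_1a_2,\dots,a_1a_2\cdots a_{n-1}$ of $w$, together with the word $wa_1$, are pairwise inequivalent under $R_L$. Indeed, for two distinct prefixes $a_1\cdots a_i$ and $a_1\cdots a_j$ with $i<j$, appending $a_{j+1}\cdots a_{n-1}$ turns the longer into $w\in L$ but the shorter into a word strictly shorter than $w$, hence not in $L$; and appending the completion $a_{i+1}\cdots a_{n-1}$ to a prefix $a_1\cdots a_i$ gives $w\in L$, whereas no extension of $wa_1$ can lie in $L$ because every such word is strictly longer than $w$. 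Therefore the index of $L_{3,n}$ is at least $n+1$, so $\State(L_{3,n})\geq n+1>n$ and $L_{3,n}\notin\REG_n^Z$, which together with the first part proves the lemma.

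The two ``forward'' verifications --- that $w$ satisfies the $\SLT_2$ constraints and that the $n+1$ chosen words are pairwise distinguishable --- are routine. I expect the main obstacle to be the uniqueness direction of the $\SLT_2$ part, i.e.\ excluding \emph{every} word other than $w$; this rests entirely on the letters $a_1,\dots,a_{n-1}$ being pairwise distinct, and it requires a little extra care for the small values $n=2,3$, where the prefix window, the suffix window, and the set $I$ overlap or degenerate.
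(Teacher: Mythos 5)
Your proof is correct and follows essentially the same route as the paper: an explicit strictly locally $2$-testable representation (your $I$ is slightly larger than the paper's $\set{a_pa_{p+1}}{2\leq p\leq n-3}$, but still defines exactly $\{a_1a_2\ldots a_{n-1}\}$) together with $n+1$ pairwise Myhill--Nerode-inequivalent words to force at least $n+1$ states. The only cosmetic difference is your choice of $wa_1$ instead of the paper's $a_1a_1$ as the $(n{+}1)$-st equivalence-class representative, and you spell out the uniqueness and distinguishability arguments that the paper leaves to the reader.
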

\begin{proof}
The statement $L_{3,n}\in\SLT_2$ for $n\geq 2$ can be seen as follows.
If $n=2$, then $L_{3,n}=\slt{\emptyset}{\emptyset}{\emptyset}{\{a_1\}}$, otherwise 
$L_{3,n}=\slt{\{a_1a_2\}}{\set{a_pa_{p+1}}{2\leq p\leq n-3}}{\{a_{n-2}a_{n-1}\}}{\emptyset}$.

For accepting any language $L_{3,n}$ for $n\geq 2$, at least $n+1$ states are necessary (follows from the
fact that the $n$ partial words $a_1\ldots a_i$ for $0\leq i\leq n-1$ and $a_1a_1$ are pairwise not in the 
Myhill-Nerode relation).
%For accepting any language $L_{3,n}$ for $n\geq 2$, at least $n+1$ states are necessary. This follows from the
%fact that the $n$ partial words $a_1\cdots a_i$ for $0\leq i\leq n-1$ and $a_1a_1$ are pairwise not in the 
%Myhill-Nerode relation:
%Let $i$ and $j$ be natural numbers such that~$0\leq i<j\leq n-1$, then $a_1\cdots a_i a_{i+1}\cdots a_{n-1}\in L_{3,n}$
%but $a_1\cdots a_j a_{i+1}\cdots a_{n-1}\notin L_{3,n}$ and~$a_1a_1 a_{i+1}\cdots a_{n-1}\notin L_{3,n}$ and, furthermore,
%$a_1\cdots a_{n-1}\in L_{3,n}$ but $a_1a_1\notin L_{3,n}$.
\end{proof}

\begin{lemma}\label{l-l4}
For each natural number $n\geq 1$, let $L_{4,n}=\{a^n\}$.
Then $L_{4,n}$ belongs to the set $\RL_1^P\setminus\SLT_n$.
\end{lemma}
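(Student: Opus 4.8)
The plan is to establish the two memberships separately. For the positive part $L_{4,n}\in\RL_1^P$, I would exhibit a right-linear grammar with a single production: take $G=(\{S\},\{a\},\{S\ra a^n\},S)$. The rule $S\ra a^n$ has the admissible form $A\ra w$ with $w\in T^*$, the grammar generates exactly $\{a^n\}$, and it uses one rule; hence $\Prod_\RL(L_{4,n})\leq 1$, i.e.\ $L_{4,n}\in\RL_1^P$. (For $n=1$ this is the grammar with the single rule $S\ra a$.)

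For the negative part $L_{4,n}\notin\SLT_n$, I would argue by contradiction. Suppose $L_{4,n}$ is strictly locally $n$-testable over the unary alphabet $V=\{a\}$, witnessed by sets $B,I,E\subseteq V^n$ and a finite set $F\subseteq V^{{}\leq n-1}$ with $L_{4,n}=\slt{B}{I}{E}{F}$. Since $V^n=\{a^n\}$, each of $B$, $I$, $E$ is a subset of $\{a^n\}$. Applying the characterization to $a^n\in L_{4,n}$ (of length $n\geq n$): its length-$n$ prefix and its length-$n$ suffix both equal $a^n$, so necessarily $a^n\in B$ and $a^n\in E$, while the condition on interior factors is vacuous because the index range $1\le j\le n-n-1$ is empty.

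The key step is then to test the word $a^{n+1}$. It has length $n+1\geq n$, so its membership in $\slt{B}{I}{E}{F}$ is decided by $B$, $I$, $E$ alone; its length-$n$ prefix and its length-$n$ suffix are again both $a^n$, which lie in $B$ and $E$ by the previous step, and the interior condition is once more vacuous since the range $1\le j\le (n+1)-n-1=0$ is empty. Hence $a^{n+1}\in\slt{B}{I}{E}{F}=L_{4,n}=\{a^n\}$, a contradiction. Together with the first part, this gives $L_{4,n}\in\RL_1^P\setminus\SLT_n$.

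I do not anticipate a real obstacle here; the only point requiring care is the index bookkeeping in the definition of strict local $k$-testability, namely that a one-letter word of length $n$ or $n+1$ has no interior factor of length $n$ at all, so that $B$ and $E$ already force every sufficiently long unary word into the language. Equivalently, one can phrase the whole argument uniformly: once $a^n\in B\cap E$, every word $a^m$ with $m\geq n$ satisfies all the local conditions, so no $\SLT_n$-description can single out the language $\{a^n\}$.
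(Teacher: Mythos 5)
Your proof is correct and takes essentially the same route as the paper's: one production $S\ra a^n$ gives $L_{4,n}\in\RL_1^P$, and the observation that $a^n\in B\cap E$ forces $a^{n+1}$ into $\slt{B}{I}{E}{F}$ yields the contradiction with $L_{4,n}=\{a^n\}$. Only your closing aside that every $a^m$ with $m\geq n$ would then satisfy all local conditions slightly overstates the situation (for $m\geq n+2$ one would additionally need $a^n\in I$), but your actual argument only uses $a^{n+1}$, for which the interior condition is vacuous, so this does not affect correctness.
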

\begin{proof}
The single word $a^n$ can be generated with one rule, hence, $L_{4,n}\in\RL_1^P$.

Assume that such a language is strictly locally $n$-testable. Then, it is $L_{4,n}=\slt{B}{I}{E}{F}$ for
suitable sets $B$, $I$, $E$, and $F$. 
From $L_{4,n}=\{a^n\}$, it follows that $B=E=\{a^n\}$.
But then, also the word $a^{n+1}$ belongs to the language $L_{4,n}$ which is a contradiction.
\end{proof}

\begin{lemma}\label{l-l5}
For each natural number $n\geq 1$, let $V_n=\{a_1,a_2,\ldots,a_n\}$ be an alphabet with $n$ pairwise
different letters and let $L_{5,n}=V_n^*$.
Then, for $n\geq 1$, the language $L_n$ belongs to the set $\SLT_1\setminus\RL_n^P$.
\end{lemma}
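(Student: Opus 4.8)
The plan is to establish the two memberships separately. For the positive part, $L_{5,n}=V_n^*\in\SLT_1$ is immediate: taking $B=I=E=V_n$ and $F=\{\lambda\}$ in the definition of strict local $1$-testability, a word $a_1a_2\ldots a_m$ with $m\geq 1$ lies in $V_n^*$ if and only if each of its one-letter factors lies in $V_n$, and the empty word is handled by $F$; this is exactly the representation $\slt{V_n}{V_n}{V_n}{\{\lambda\}}$.

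For the negative part, the claim is that $V_n^*\notin\RL_n^P$, i.e.\ that every right-linear grammar generating $V_n^*$ needs at least $n+1$ production rules. First I would argue that the language $V_n^*$ requires, in any generating right-linear grammar, at least one rule per letter of $V_n$ just to make every letter derivable: since all $n$ letters of $V_n$ occur in words of $L_{5,n}$, for each $a_i$ there must be some rule whose right-hand side contains $a_i$, and since in a right-linear rule $A\to wB$ or $A\to w$ the word $w$ lies in $T^*$, one can in fact normalise to rules with at most one terminal letter — but more carefully, the counting argument should not rely on normalisation, so instead I would count as follows: a right-linear grammar for an infinite language must contain at least one "recursive" rule (a rule $A\to wB$ lying on a cycle in the dependency graph of non-terminals) to generate arbitrarily long words, and it must contain at least one terminating rule $A\to w$ with $w$ reachable. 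The key quantitative point is that a grammar with only $n$ rules generating $V_n^*$ would be too constrained; I would make this precise by examining which terminal words of length one are generated and showing that $n$ rules cannot simultaneously produce all $n$ singletons $\{a_i\}$ and still produce, say, all two-letter words, without exceeding $n$ rules.

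Concretely, the cleanest route is: suppose $G=(N,V_n,P,S)$ is right-linear with $L(G)=V_n^*$ and $|P|\leq n$. Every length-one word $a_i$ ($1\leq i\leq n$) is derivable, so there is a derivation $S\Rightarrow^* a_i$; the last rule used is a terminating rule $A\to w$ with $w$ a subword of $a_i$, hence $w\in\{\lambda,a_i\}$. Also $\lambda\in V_n^*$ forces a derivation of $\lambda$. Counting the terminal symbols appearing across all right-hand sides and tracking that the $n$ distinct letters must each appear somewhere, combined with the need for at least one non-terminating (right-recursive) rule to obtain unbounded length, pushes the rule count to at least $n+1$. The main obstacle will be handling the bookkeeping rigorously: right-linear rules may carry whole words $w\in T^*$, so a single rule can introduce several letters at once, and I must rule out the possibility that $n$ cleverly chosen multi-letter rules suffice. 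The way around this is to reduce to regular (letter-by-letter) form — recalling from the preliminaries that every right-linear language is regular — but since $\Prod_\RL$ is defined over right-linear grammars, not regular ones, I would instead argue directly: a right-linear grammar with $k$ rules has at most $k$ non-terminals on left-hand sides, its state/non-terminal graph has $\leq k$ edges, and an Eulerian/pumping analysis shows that to generate every word of $V_n^*$ (in particular all of $V_n^{\leq 2}$ and words of every length) the underlying transition structure must have at least $n$ looping transitions at a single "hub" non-terminal plus at least one exit rule, giving $n+1$; I expect verifying this combinatorial lower bound to be the crux, and it parallels the Myhill–Nerode-style counting used in Lemmas~\ref{l-l2} and~\ref{l-l3}.
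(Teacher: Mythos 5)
Your positive half is fine and matches the paper: $\slt{V_n}{V_n}{V_n}{\{\lambda\}}$ represents $V_n^*$, so $L_{5,n}\in\SLT_1$. The problem is the negative half, which is where all the content of the lemma lies: you never actually prove that $V_n^*\notin\RL_n^P$. Your first counting idea (each letter must occur in some right-hand side) only shows that the right-hand sides collectively cover all $n$ letters, which a single rule such as $S\ra a_1a_2\cdots a_nS$ already achieves, so it cannot force $n$ distinct rules; you notice this yourself. Your fallback via the singletons $\{a_i\}$ also cannot separate $n$ from $n+1$ rules, because one terminating rule $A\ra\lambda$ serves as the last step for every singleton derivation, so no per-letter multiplicity comes out of it. Finally, the ``at least $n$ looping transitions at a single hub non-terminal'' claim is neither justified nor true as stated (the required non-terminating rules need not be attached to one non-terminal), and you explicitly flag the combinatorial lower bound as an unverified crux. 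So the proposal, as written, contains a plan but not a proof of the membership $L_{5,n}\notin\RL_n^P$.

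The missing idea, which is what the paper's (terse) argument amounts to, is to look at unary words: for each letter $a_i$, consider $a_i^m$ with $m$ larger than the length of the longest right-hand side of a terminating rule of a hypothetical right-linear grammar $G$ with $L(G)=V_n^*$. A derivation of $a_i^m$ consists of applications of non-terminating rules followed by one terminating rule, and every terminal segment contributed is a factor of $a_i^m$, hence lies in $\{a_i\}^*$. Since the final terminating rule contributes fewer than $m$ letters, some non-terminating rule $A\ra wB$ with $w\in\{a_i\}^+$ must be used. These rules are pairwise distinct for different letters $a_i$ (their terminal parts are nonempty and use only $a_i$), which gives $n$ distinct non-terminating rules; additionally every derivation ends with a terminating rule, so $G$ has at least $n+1$ rules, i.e.\ $\Prod_\RL(L_{5,n})\geq n+1$. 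This is exactly the paper's statement that a non-terminating rule is needed for every letter plus a terminating rule; your write-up needs this argument (or an equivalent one) to close the gap.
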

\begin{proof}
The language $L_{5,n}$ can be represented as
$L_{5,n}=\slt{V}{V}{V}{\{\lambda\}}$. Hence, $L_{5,n}\in\SLT_1$ for $n\geq 1$.

For generating a language $L_{5,n}$ for some number $n\geq 1$, at least a non-terminating rule is necessary for every 
letter $a_i$ ($1\leq i\leq n$) and additionally a terminating rule. Hence, $L_{5,n}\notin\RL_n^P$.
\end{proof}

\begin{lemma}\label{l-l6}
The language $L_6=\{a\}$ belongs to $\RL_1^V\setminus\SLT_1$.
\end{lemma}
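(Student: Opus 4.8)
The plan is to establish the two memberships separately, each by a short direct argument paralleling the proofs of Lemmas~\ref{l-l1} and~\ref{l-l4}.

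For the inclusion $L_6\in\RL_1^V$, I would simply exhibit a right-linear grammar with a single non-terminal symbol that generates $\{a\}$, namely $G=(\{S\},\{a\},\{S\ra a\},S)$. Since $\Var(G)=1$, this gives $\Var_\RL(L_6)\leq 1$, hence $L_6\in\RL_1^V$.

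For the non-inclusion $L_6\notin\SLT_1$, I would argue by contradiction. Suppose $L_6$ is strictly locally $1$-testable over $V=\{a\}$; then there are sets $B,I,E\subseteq V^1$ and $F\subseteq V^{{}\leq 0}$ with $L_6=\slt{B}{I}{E}{F}$. The word $a$ lies in $L_6$ and has length exactly $k=1$, so the defining conditions (with the interior condition $a_{j+1}\ldots a_{j+k}\in I$ vacuously satisfied, since the range $1\leq j\leq n-k-1$ is empty for $n=1$) force $a\in B$ and $a\in E$. Now consider the word $aa$, of length $n=2\geq k=1$: its length-$k$ prefix is $a\in B$, its length-$k$ suffix is $a\in E$, and the interior condition is again vacuous (the range $1\leq j\leq n-k-1=0$ is empty). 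Hence $aa\in\slt{B}{I}{E}{F}=L_6$, contradicting $L_6=\{a\}$.

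I do not expect a genuine obstacle here; the only point requiring care is the degenerate case $k=1$, where the interior set $I$ plays no role and the finite set $F$ of short words is forced to be empty (as $\lambda\notin L_6$). Once this is noted, the contradiction via the word $aa$ is immediate, in the same spirit as the contradictions obtained in Lemmas~\ref{l-l1} and~\ref{l-l4}.
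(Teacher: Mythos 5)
Your proof is correct and follows essentially the same route as the paper: exhibit a one-nonterminal (indeed one-rule) grammar for $\{a\}$, then derive the contradiction that $a\in B\cap E$ forces $aa$ into any strictly locally $1$-testable representation of $L_6$. The extra remarks on the vacuous interior condition and the set $F$ are just added detail, not a different argument.
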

\begin{proof}
The language $L_6$ can be generated with a single rule and, hence, with one non-terminal only.

Assume that $L_6$ is strictly locally 1-testable and can be represented as $\slt{B}{I}{E}{F}$. Then
$B=E=\{a\}$. But then, also the word $aa$ belongs to the language which is a contradiction.
\end{proof}

\begin{lemma}\label{l-l7}
The language $L_7=\{a\}\{b\}^*\{a\}\cup\{a\}$ belongs to $\SLT_1\setminus\RL_1^V$.
\end{lemma}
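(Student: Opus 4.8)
The plan is to prove the two memberships separately: first exhibit $L_7$ as a strictly locally $1$-testable language by naming the three one-letter sets, and then derive a contradiction from the very restrictive shape of right-linear grammars having only one nonterminal.

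For $L_7\in\SLT_1$, I would work over $V=\{a,b\}$ and simply propose $L_7=\slt{\{a\}}{\{b\}}{\{a\}}{\emptyset}$. Verifying this is a short case analysis using the definition with $k=1$: a word $a_1a_2\ldots a_n$ of length $n\geq 1$ lies in the described language exactly when $a_1=a$, $a_n=a$, and every inner letter lies in $\{b\}$. For $n=1$ this forces the word $a$; for $n=2$ it forces $aa$; for $n\geq 3$ it forces the words of the form $ab^{n-2}a$. Together these are precisely $\{a\}\cup\{a\}\{b\}^*\{a\}=L_7$, and since $\lambda\notin L_7$ the short-word set is indeed $F=\emptyset$. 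This part is routine.

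For $L_7\notin\RL_1^V$, I would use that a right-linear grammar with a single nonterminal $S$ has only rules of the forms $S\ra uS$ and $S\ra v$ with $u,v\in T^*$, so its language is $U^*V$, where $U$ is the (finite) set of right-hand sides of the ``recursive'' rules with the trailing $S$ stripped off and $V$ the (finite) set of right-hand sides of the ``terminating'' rules. Assume $L_7=U^*V$. Since $L_7$ is infinite, $V\neq\emptyset$ and $U$ must contain some word $u\neq\lambda$; fix $v\in V$, so that $u^nv\in L_7$ for every $n\geq 0$. Now $|u^nv|_a=n\,|u|_a+|v|_a$, and every word of $L_7$ contains at most two occurrences of $a$, forcing $|u|_a=0$ and hence $u\in\{b\}^+$. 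But then, for $n\geq 1$, the word $u^nv$ begins with $b$, whereas every word of $L_7$ begins with $a$ --- a contradiction. Hence $L_7\notin\RL_1^V$.

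The only point requiring a little care is the passage from ``$L_7\in\RL_1^V$'' to ``$L_7=U^*V$ for finite sets $U,V$'': one must invoke the precise form of right-linear rules fixed in the preliminaries, and note that if $\lambda\in U$ it may be removed without changing $U^*$, so that $U$ genuinely contains a nonempty word. Once that normal form is in hand, the contradiction via counting $a$'s and inspecting the first letter is immediate, so I do not expect a real obstacle here.
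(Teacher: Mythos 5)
Your proof is correct, and while the $\SLT_1$ part coincides with the paper's (the same representation $\slt{\{a\}}{\{b\}}{\{a\}}{\emptyset}$), your argument for $L_7\notin\RL_1^V$ takes a different route. The paper argues directly on a witness word: with $m$ the maximal length of a right-hand side, the word $ab^ma$ cannot be produced in one step, so some recursive step $S\Lra ab^pS$ must occur in its derivation, and iterating that step once more yields $ab^pab^ma\notin L_7$ --- a pumping-style contradiction tailored to $L_7$. You instead first establish the normal form $L(G)=U^*V$ for any right-linear grammar with the single nonterminal $S$ (with $U$ the stripped right-hand sides of rules $S\ra uS$ and $V$ those of terminating rules), then conclude by counting occurrences of $a$ (forcing $u\in\{b\}^+$) and inspecting the first letter. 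Your normal-form step is slightly more work up front but gives a reusable characterization of one-nonterminal right-linear languages and avoids the bookkeeping about which prefix the first recursive rule contributes (a point the paper's proof treats somewhat informally, e.g.\ it tacitly assumes the first applied rule has a right-hand side of the form $ab^pS$); the paper's argument is shorter and needs no structural lemma. Both are sound, so there is no gap to repair.
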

\begin{proof}
The language $L_7$ is strictly locally 1-testable and can be represented as $\slt{\{a\}}{\{b\}}{\{a\}}{\emptyset}$.

Assume that the language $L_7$ is generated by a right-linear grammar with one non-terminal symbol only.
Let $m$ be the maximal length of the right-hand side of a rule: $m=\max(\set{w}{S\ra w\in P})$. Then, the
word $ab^ma$ cannot be derived in one step. Hence, there is a derivation $S\Lra ab^pS\Lra^* ab^ma$ for some number $p$
with $0\leq p\leq m-2$. But then, also the derivation $S\Lra ab^pS\Lra ab^pab^pS\Lra^* ab^pab^ma$ is possible
which yields a word which does not belong to the language $L_7$. Due to this contradiction, 
we obtain that $L_7\notin\RL_1^V$.
\end{proof}

\begin{lemma}\label{l-l8}
The language $L_8=\set{a^{3m}}{m\geq 1}$ belongs to $\RL_1^V\setminus\SLT$.
\end{lemma}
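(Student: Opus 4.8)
The plan is to establish the two membership claims separately: first $L_8 = \set{a^{3m}}{m\geq 1} \in \RL_1^V$, and then $L_8 \notin \SLT$.

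For the first part, I would exhibit a right-linear grammar with a single non-terminal symbol that generates $L_8$. The natural choice is $G = (\{S\}, \{a\}, P, S)$ with $P = \{S\ra aaaS,\ S\ra aaa\}$. A straightforward induction shows that $S \Lra^* a^{3k}S$ for every $k\geq 0$, and then applying the terminating rule $S\ra aaa$ yields exactly the words $a^{3m}$ with $m\geq 1$; conversely every terminal word derived has length a multiple of $3$ and is non-empty, so $L(G)=L_8$. Since $\Var(G)=1$, we get $\Var_\RL(L_8)\leq 1$, hence $L_8\in\RL_1^V$. This part is routine and I expect no difficulties.

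For the second part, I would argue by contradiction in the same style as Lemma~\ref{l-l1} and Lemma~\ref{l-l4}. Suppose $L_8$ is strictly locally testable; then it is strictly locally $k$-testable for some $k\geq 1$, so there are sets $B, I, E \subseteq \{a\}^k$ and $F\subseteq \{a\}^{\leq k-1}$ with $L_8 = \slt{B}{I}{E}{F}$. Over the one-letter alphabet $\{a\}$ we have $\{a\}^k = \{a^k\}$, so each of $B$, $I$, $E$ is either empty or equals $\{a^k\}$. Now pick a word of $L_8$ that is long enough to exercise all three conditions, for instance $a^{3m}$ with $3m \geq k+1$ (and also $3m$ large enough that the ``middle'' condition is non-vacuous, say $3m \geq 2k+1$); since this word lies in $L_8$, the defining conditions force $a^k\in B$, $a^k\in I$, and $a^k\in E$. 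But then \emph{every} word $a^n$ with $n\geq k$ satisfies all three membership conditions (all its length-$k$ factors equal $a^k$), so $a^n\in L_8$ for all $n\geq k$. This contradicts $L_8 = \set{a^{3m}}{m\geq 1}$, since, e.g., $a^{n}$ with $n$ not a multiple of $3$ and $n\geq k$ is not in $L_8$. Hence $L_8\notin\SLT$.

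I do not anticipate a genuine obstacle here; the only point requiring a little care is making sure the chosen witness word $a^{3m}$ is simultaneously long enough that the prefix, suffix, and all interior factor conditions are actually triggered (so that $B$, $I$, and $E$ are each forced to be non-empty), and that the contradicting word $a^n$ is chosen with $n\geq k$ so that it is not relegated to the finite exceptional set $F$. Both are easily arranged by taking $m$ large.
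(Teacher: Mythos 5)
Your proposal is correct and follows essentially the same route as the paper: the same one-non-terminal grammar $\{S\ra a^3S,\ S\ra a^3\}$ for membership in $\RL_1^V$, and the same contradiction argument forcing $a^k\in B\cap I\cap E$ from a sufficiently long word of $L_8$ (the paper uses $a^{3k}$ and the contradicting word $a^{3k+1}$, while you take a generic long $a^{3m}$ and any $a^n$ with $n\geq k$ not divisible by $3$). The difference is only in the choice of witness words, not in the method.
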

\begin{proof}
The language $L_8$ is generated by the right-linear grammar $G=(\{S\},\{a\},\{S\ra a^3S,\ S\ra a^3\},S)$.
Hence, $L_8\in\RL_1^V$.

Assume that the language $L_8$ is generated by a strictly locally $k$-testable grammar for some number~$k\geq 1$.
Then, $L_8$ has a representation as $\slt{B}{I}{E}{F}$ with $B\cup I\cup\ E\subseteq\{a\}^k$ and $F\subseteq \{a\}^{{}\leq k-1}$. 
Since the word~$a^{3k}$ belongs to the language $L_8$, we obtain that $B$, $I$, and $E$ contain the word $a^k$. 
But then, also the word $a^{3k+1}$  
belongs to the language $L_8$ which is a contradiction.
\end{proof}

\begin{lemma}\label{l-l9}
For each natural number $n\geq 1$, let $V_n=\{a_1,a_2,\ldots,a_{n+1}\}$ be an alphabet with $n+1$ pairwise
different letters and let $L_{9,n}=\{a_1\}^+\{a_2\}^+\cdots\{a_{n+1}\}^+$.
Then, for $n\geq 1$, the language $L_{9,n}$ belongs to the set $\SLT_2\setminus\RL_n^V$.
\end{lemma}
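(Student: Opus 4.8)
The plan is to prove the two memberships separately. For $L_{9,n}\in\SLT_2$, I would give an explicit $\slt{B}{I}{E}{F}$ representation over the alphabet $V_n=\{a_1,\ldots,a_{n+1}\}$. A word lies in $L_{9,n}=\{a_1\}^+\{a_2\}^+\cdots\{a_{n+1}\}^+$ exactly when, reading it two letters at a time, every length-$2$ factor is of the form $a_ia_i$ or $a_ia_{i+1}$ (so the letters are non-decreasing in index and increase only by one), the word begins with $a_1a_1$ or $a_1a_2$, the word ends with $a_{n+1}a_{n+1}$ or $a_na_{n+1}$, and all $n+1$ letters actually occur. The local conditions are captured by setting $I=\set{a_ia_i}{1\le i\le n+1}\cup\set{a_ia_{i+1}}{1\le i\le n}$, with $B$ the subset of $I$ whose first letter is $a_1$ and $E$ the subset of $I$ whose second letter is $a_{n+1}$, and $F=\emptyset$ (the shortest word of $L_{9,n}$ has length $n+1\ge 2$, so no short words are needed). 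I should double check that this strictly locally $2$-testable language coincides with $L_{9,n}$: the factor conditions force the letters to appear as a non-decreasing run of indices with steps of $0$ or $1$, the boundary conditions force the run to start at index $1$ and finish at index $n+1$, and since the indices never skip a value every $a_i$ occurs at least once; conversely every word of $L_{9,n}$ clearly satisfies all the conditions. Hence $L_{9,n}\in\SLT_2$.

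For $L_{9,n}\notin\RL_n^V$, the plan is a pumping-style argument on the number of non-terminals, modelled on the proof of Lemma~\ref{l-l7}. Suppose $G=(N,T,P,S)$ is a right-linear grammar with $|N|\le n$ generating $L_{9,n}$. Consider a sufficiently long word $w=a_1^{k}a_2^{k}\cdots a_{n+1}^{k}$ with $k$ large, and a derivation of $w$ in $G$. Since a right-linear derivation produces the terminal letters of $w$ from left to right, and $w$ contains $n+1$ distinct letters appearing in $n+1$ consecutive blocks, the derivation must pass, at the boundary between the $a_i$-block and the $a_{i+1}$-block, through some sentential form $u_i B_i$ where $B_i\in N$ and $u_i$ is the already-generated prefix. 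This gives $n+1$ such "block-boundary" non-terminals $B_0,\ldots,B_n$ (taking also the point just before the first $a_1$ is emitted, where the relevant non-terminal is reachable and from which a word starting in $a_1$ must follow). Because $|N|\le n$, two of these $n+1$ non-terminals coincide, say $B_i=B_j$ with $i<j$. Then splicing the derivation — routing from the copy of $B_i=B_j$ reached after block $i$ into the continuation that followed $B_j$ — yields a derivation of a word in which the letter $a_{i+1}$ (and the letters up to $a_j$) is missing, or in which letters reappear out of order; either way the resulting word is not in $\{a_1\}^+\cdots\{a_{n+1}\}^+$, contradicting $L(G)=L_{9,n}$.

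The main obstacle is making the "block-boundary non-terminal" argument fully rigorous: I need to argue that in any right-linear derivation of $w$ there genuinely is a single non-terminal responsible for the transition out of each maximal letter-block, and that the cut-and-paste is legitimate (the spliced sequence really is a valid derivation, and it really produces a word outside $L_{9,n}$). The cleanest way is to use the standard correspondence between right-linear grammars and NFAs: $\Var_\RL(L)\le n$ is essentially a statement about an $n$-state automaton for $L$ (up to the usual off-by-one bookkeeping with final states), and then I can invoke a pumping lemma directly — any accepting run on $w$ revisits a state between two different blocks, and pumping that loop changes the block structure, contradicting that the automaton accepts exactly $L_{9,n}$. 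I would phrase the proof in automaton terms to avoid fussing over right-hand-side lengths, then translate the state bound back to the non-terminal bound. A minor point to handle carefully is the exact relationship between the number of non-terminals of a right-linear grammar and the number of states of an equivalent automaton, but for a lower-bound argument of the form "at least $n+1$ distinguishable boundaries are forced" the reasoning goes through with room to spare, so this is routine rather than delicate.
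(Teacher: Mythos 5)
Your first half is fine and is exactly the paper's argument: the representation $L_{9,n}=\slt{\{a_1a_1,a_1a_2\}}{\set{a_pa_p}{1\leq p\leq n+1}\cup\set{a_pa_{p+1}}{1\leq p\leq n}}{\{a_na_{n+1},a_{n+1}a_{n+1}\}}{\emptyset}$ is the one given there, and your verification that it equals $L_{9,n}$ is correct. The problem is in the second half. Pigeonholing on \emph{block-boundary} non-terminals does not force a contradiction. Let $\ell$ be the maximal length of a terminal right-hand side and $k>\ell$, and consider $w=a_1^k\cdots a_{n+1}^k$. The boundary non-terminal for boundary $i$ is reached in a sentential form $a_1^k\cdots a_{i-1}^ka_i^{s_i}B_i$ with $k-\ell<s_i\leq k$; the boundary forms need not sit exactly at the block ends. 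If the one guaranteed collision (you have $n+1$ pigeons and $n$ holes, so exactly one collision is forced) is between \emph{adjacent} boundaries, $B_i=B_{i+1}$, then cutting out the infix yields $a_1^k\cdots a_{i-1}^ka_i^{s_i}a_{i+1}^{\,k-s_{i+1}}a_{i+2}^k\cdots a_{n+1}^k$, which still lies in $L_{9,n}$ whenever $s_{i+1}<k$, and pumping the loop in once more yields $a_1^k\cdots a_i^{k}a_{i+1}^{\,k+s_{i+1}}a_{i+2}^k\cdots a_{n+1}^k\in L_{9,n}$ whenever $s_i=k$. So in the case $s_i=k$, $s_{i+1}<k$ neither splice produces a word outside the language: ``a letter is missing or letters are out of order'' is simply not forced. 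The automaton fallback does not repair this, because the direction you need fails: a bound $\Var_\RL(L)\leq n$ gives \emph{no} comparable bound on the number of states, since right-hand sides may be arbitrarily long -- the paper's own $L_{4,k}=\{a^k\}$ (Lemma~\ref{l-l4}) is generated with one non-terminal and one rule but requires about $k$ states. This ``translation back'' is precisely the delicate step, and it is false, not routine bookkeeping.

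What does work -- and is what the paper's one-line justification ``at least one non-terminal is necessary for every letter $a_i$'' amounts to -- is to pigeonhole \emph{inside} each block rather than at the boundaries. Choose $k$ large (say $k>(|N|+2)\ell$). While the frontier of a derivation of $w$ moves through the $a_i$-block, some non-terminal repeats with a non-empty terminal infix consisting of $a_i$'s only; hence for every $i$ there is a non-terminal $B_i$ with $B_i\Lra^+ a_i^{p_i}B_i$ for some $p_i\geq 1$, and since $B_i$ occurs in a successful derivation, $S\Lra^* uB_i$ and $B_i\Lra^* v$ for some terminal words $u,v$. If $B_i=B_j$ with $i<j$, then $S\Lra^* uB_i\Lra^+ ua_j^{p_j}B_i\Lra^+ ua_j^{p_j}a_i^{p_i}B_i\Lra^* ua_j^{p_j}a_i^{p_i}v$ derives a word with an occurrence of $a_j$ to the left of an occurrence of $a_i$, which is outside $L_{9,n}$ regardless of how the pieces align. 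Hence $B_1,\ldots,B_{n+1}$ are pairwise distinct and $\Var_\RL(L_{9,n})\geq n+1$, i.e., $L_{9,n}\notin\RL_n^V$. Replacing your boundary pigeons by these per-letter loop pigeons closes the gap.
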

\begin{proof}
The language $L_{9,n}$ can be represented as
\[L_{9,n}=\slt{\{a_1a_1,a_1a_2\}}{\set{a_pa_p}{1\leq p\leq n+1}\cup\set{a_pa_{p+1}}{1\leq p\leq n}}{\{a_na_{n+1},a_{n+1}a_{n+1}\}}{\emptyset}.\]
Hence, $L_{9,n}\in\SLT_2$ for $n\geq 1$.

For generating a language $L_{9,n}$ for some number $n\geq 1$, at least a non-terminal symbol is necessary for every 
letter $a_i$ ($1\leq i\leq n+1$). Hence, $L_{9,n}\notin\RL_n^V$.
\end{proof}

We now prove inclusion relations and incomparabilities.

\begin{lemma}\label{lem-slt1-reg5}
The class $\SLT_1$ is properly included in the class $\REG_5^Z$.
\end{lemma}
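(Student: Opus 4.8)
The plan is to prove the two halves of the statement separately. For the inclusion $\SLT_1\subseteq\REG_5^Z$, I would first unfold the definition: an arbitrary strictly locally $1$-testable language $L$ over an alphabet $V$ has the form $L=\slt{B}{I}{E}{F}$ with $B,I,E\subseteq V$ and $F\subseteq\{\lambda\}$, and this amounts to $L=F\cup(B\cap E)\cup BI^*E$, since a word of length one belongs to $L$ exactly when it belongs to $B\cap E$, while a word $a_1a_2\ldots a_n$ of length $n\geq 2$ belongs to $L$ exactly when $a_1\in B$, $a_n\in E$, and $a_2,\ldots,a_{n-1}\in I$. It then remains to recognise every such language by a deterministic finite automaton with at most five states.

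I would build the automaton so that the state reached after reading a non-empty prefix $a_1\ldots a_m$ records only the information that can still matter: whether the prefix is still \emph{alive} (that is, $a_1\in B$ and $a_2,\ldots,a_{m-1}\in I$, so some continuation could still lie in $L$), whether the prefix itself would be accepted ($a_m\in E$), and whether one further letter may be appended without killing the prefix ($m=1$, or $a_m\in I$, the point being that appending a letter promotes $a_m$ to an interior position). This leaves a start state $z_0$, a dead sink $z_4$, and three ``live'' states $z_1$ (accepted and extendable), $z_2$ (accepted but not extendable), $z_3$ (not accepted but extendable), with $F=\{z_1,z_2\}$, and $z_0$ added to $F$ precisely when $\lambda\in L$. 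The transition function is then essentially forced: $\delta(z_0,x)$ is $z_1$, $z_3$, or $z_4$ according as $x\in B\cap E$, $x\in B\setminus E$, or $x\notin B$; $\delta(z_1,x)=\delta(z_3,x)$ is $z_1$, $z_2$, $z_3$, or $z_4$ according as $x\in E\cap I$, $x\in E\setminus I$, $x\in I\setminus E$, or $x\notin E\cup I$; and $\delta(z_2,x)=\delta(z_4,x)=z_4$ for every letter $x$. Correctness then follows from a routine induction on the length of the input word, showing that after reading $w$ the automaton is in the state described informally above, so that $w$ is accepted if and only if $w\in L$; I would also verify the degenerate cases $B=\emptyset$ and $E=\emptyset$, in which $L$ collapses to $F$.

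The one genuinely delicate point --- and the reason five states suffice rather than the seven or eight a naive design would need --- is the handling of the first letter and of length-one words. Because the first letter of a word is never an interior letter, $a_1$ only ever has to lie in $B$; hence a live prefix of length one behaves, as far as the remaining transitions go, exactly like a live prefix of length $\geq 2$ whose last letter happens to lie in $I$. This is what lets the same states $z_1$ and $z_3$ serve both situations, and lets the live prefixes with $a_m\notin E\cup I$ be merged with the dead sink. Getting this identification right, and checking that it survives the edge cases, is where the real work lies.

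For the properness it suffices to produce one language in $\REG_5^Z\setminus\SLT_1$. The language $L_1=\{a\}^*\{b\}\{a,b\}^*$ of Lemma~\ref{l-l1} lies in $\REG_2^Z$, hence in $\REG_5^Z$ because $\State(L_1)\leq 2\leq 5$; on the other hand Lemma~\ref{l-l1} also shows $L_1\notin\SLT$, so $L_1\notin\SLT_1$. Therefore the inclusion $\SLT_1\subseteq\REG_5^Z$ is proper.
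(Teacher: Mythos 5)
Your proof is correct and takes essentially the same route as the paper: the five-state automaton you describe (start state $z_0$, accepting states $z_1,z_2$, sink $z_4$, with transitions split according to $B\cap E$, $B\setminus E$, $V\setminus B$, $E\cap I$, $E\setminus I$, $I\setminus E$, $V\setminus(E\cup I)$, and $\delta(z_1,\cdot)=\delta(z_3,\cdot)$) is exactly the automaton the paper constructs, the paper likewise leaves the detailed verification of $L(A)=L$ to the reader, and the properness witness $L_1=\{a\}^*\{b\}\{a,b\}^*$ from Lemma~\ref{l-l1} is the same.
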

\begin{proof}
We first prove the inclusion $\SLT_1\subseteq\REG_5^Z$.

Let $L$ be a strictly locally 1-testable language.
Then $L=\slt{B}{I}{E}{F}$ with $B\subseteq V$, $I\subseteq V$, $E\subseteq V$, and~$F\subseteq \sets{\lambda}$. 
We construct the following deterministic finite automaton:
\[A=(V,\sets{z_0,z_1,\ldots,z_4},z_0,Z_{\mathrm{f}},\delta)\]
where 
\[Z_{\mathrm{f}}=\sets{z_1,z_2}\cup
\begin{cases}
\sets{z_0}, & \text{if $\lambda\in F$},\\
\emptyset,  & \text{otherwise},
\end{cases}
\]
and the transition function $\delta$ is given by the following diagram 
($z_0$ is an accepting state if and only if~$\lambda\in F$):\vspace{-3mm}
\begin{center}
\begin{tikzpicture}[on grid,>=stealth',initial text={\sf start}%,initial where=above%,
%background rectangle/.style={draw=yellow!80,fill=yellow!20,rounded corners=1ex},
%show background rectangle
]
%\gitter
%\node[state,minimum size=4mm] (z_0) at (0,0) {$z_0$};
\node[state,minimum size=4mm,initial] (z_0) at (2,1) {$z_0$};
\node[state,minimum size=4mm,accepting] (z_1) at (4,2) {$z_1$};
\node[state,minimum size=4mm,accepting] (z_2) at (6,2) {$z_2$};
\node[state,minimum size=4mm] (z_3) at (4,0) {$z_3$};
\node[state,minimum size=4mm] (z_4) at (6,0) {$z_4$};
%\node[state,minimum size=4mm] (z_6) at (8,1) {$z_6$};
%
%\draw[->] (z_0) .. controls +(255:1) and +(285:1) .. node [below] {\tiny$V$} (z_0);
%\draw[->] (z_1) edge node [above,sloped] {\tiny$V\setminus B$} (z_0);
\draw[->] (z_0) edge node [above,sloped] {\tiny$B\cap E$} (z_1);
\draw[->] (z_0) edge node [below,sloped] {\tiny$B\setminus E$} (z_3);
\draw[->,rounded corners] (z_0) |- +(1,-2) node [above,sloped,pos=.2] {\tiny$V\setminus B$} -| (z_4);
\draw[->] (z_1) edge [bend right=30] node [below,sloped] {\tiny$I\setminus E$} (z_3);
\draw[->] (z_1) edge node [above,sloped] {\tiny$E\setminus I$} (z_2);
\draw[->] (z_1) edge node [above,sloped,pos=.25] {\tiny$V\setminus(E\cup I)$} (z_4);
\draw[->] (z_1) .. controls +(120:1) and +(150:1) .. node [left] {\tiny$E\cap I$} (z_1);
%\draw[->,rounded corners] (z_1) |- +(1,1) node [below,sloped] {\tiny$V\setminus (E\cup I)$} -| (z_6);
\draw[->] (z_2) edge node [above,sloped] {\tiny$V$} (z_4);
\draw[->] (z_3) edge [bend right=10] node [above,sloped] {\tiny$E\cap I$} (z_1);
\draw[->] (z_3) edge node [above,sloped,pos=.3] {\tiny$E\setminus I$} (z_2);
\draw[->] (z_3) edge node [below,sloped] {\tiny$V\setminus(E\cup I)$} (z_4);
%\draw[->] (z_4) edge [bend right=10] node [above,sloped] {\tiny$I\setminus E$} (z_3);
%\draw[->,rounded corners] (z_3) |- +(1,-1) node [above,sloped] {\tiny$V\setminus (E\cup I)$} -| (z_6);
\draw[->] (z_3) .. controls +(210:1) and +(240:1) .. node [left] {\tiny$I\setminus E$} (z_3);
%\draw[->] (z_2) edge node [above,sloped] {\tiny$V$} (z_6);
%\draw[->] (z_4) edge node [below,sloped] {\tiny$V\setminus (E\cup I)$} (z_6);
\draw[->] (z_4) .. controls +(300:1) and +(330:1) .. node [right] {\tiny$V$} (z_4);
%\draw[->] (z_4) edge node [below,sloped] {\tiny$E\setminus I$} (z_2);
%\draw[->] (z_6) .. controls +(345:1) and +(15:1) .. node [right] {\tiny$V$} (z_6);
%(z_1) edge [loop above] node [above] {$a$} (z_1)
\end{tikzpicture}
\end{center}

Due to space reasons, we leave the proof that $L(A)=L$ to the reader.
From the construction follows the inclusion $\SLT_1\subseteq\REG_5^Z$.

A witness language for the properness of this inclusion is the language $L_1=\{a\}^*\{b\}\{a,b\}^*$ from Lemma~\ref{l-l1}.
\end{proof}

\begin{lemma}\label{lem-slt1-reg234}
The class $\SLT_1$ is incomparable to the classes $\REG_i^Z$ for $i\in\sets{2,3,4}$.
\end{lemma}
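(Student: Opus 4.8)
The plan is to establish incomparability in the usual two-directional way: exhibit, for each $i\in\{2,3,4\}$, a language in $\SLT_1\setminus\REG_i^Z$ and a language in $\REG_i^Z\setminus\SLT_1$. Several of the required witnesses are already available from the preparatory lemmas, so the work is mostly a matter of assembling them and supplying the few easy arguments that are not yet recorded.

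For the direction $\SLT_1\not\subseteq\REG_i^Z$ with $i\in\{2,3,4\}$, I would use the language $L_2=\slt{\{a,b\}}{\{b,c\}}{\{a,c\}}{\emptyset}$ of Lemma~\ref{l-l2}, which lies in $\SLT_1$ but requires at least five states, hence $L_2\notin\REG_4^Z\supseteq\REG_3^Z\supseteq\REG_2^Z$. This single witness handles all three values of $i$ at once, since $\REG_2^Z\subseteq\REG_3^Z\subseteq\REG_4^Z$. For the reverse direction $\REG_i^Z\not\subseteq\SLT_1$, I would produce for each $i$ a regular language accepted by an $i$-state automaton that is not strictly locally $1$-testable. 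The cleanest choice is a family built on the idea behind $L_1$: for $i=2$, the language $L_1=\{a\}^*\{b\}\{a,b\}^*$ from Lemma~\ref{l-l1} is in $\REG_2^Z$ and even outside all of $\SLT$, so a fortiori outside $\SLT_1$. For $i=3$ and $i=4$ one needs languages that genuinely require three, respectively four, states while still failing strict local $1$-testability; natural candidates are $\{a\}^*\{b\}^2\{a,b\}^*$ (three states) and $\{a\}^*\{b\}^3\{a,b\}^*$ (four states), or alternatively languages like $\{a,b\}^*\{b\}\{a\}$ padded appropriately. For each such candidate I would verify the lower bound on the number of states by a short Myhill–Nerode computation (as in Lemmas~\ref{l-l2} and~\ref{l-l3}), the upper bound by exhibiting the obvious automaton, and the failure of $\SLT_1$-membership by the pumping-style argument used in Lemma~\ref{l-l1}: if such a language equalled $\slt{B}{I}{E}{F}$ with $B,I,E\subseteq V$, then since $a,b\in V$ occur in arbitrarily long words of the language one forces $a,b\in B\cap I\cap E$, whence words violating the prescribed block-counting constraints would be admitted, a contradiction.

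The main obstacle I anticipate is not conceptual but bookkeeping: one must choose the witness families for $i=3$ and $i=4$ so that the state-count lower bound is \emph{exactly} $i$ (or at least $\geq i$ but $\leq$ whatever is needed), and simultaneously keep the $\SLT_1$-failure argument clean. A language requiring more than $i$ states would prove too little (it would only give incomparability of $\SLT_1$ with $\REG_j^Z$ for larger $j$), and one requiring fewer than $i$ would prove nothing; so the delicate point is pinning the Myhill–Nerode index down precisely. I would sidestep any difficulty here by using the scalable family $L^{(i)} = \{a\}^*\{b\}^{\,i-1}\{a,b\}^*$: its minimal automaton has states recording ``number of consecutive $b$'s seen so far, capped,'' giving exactly $i$ states, and the $\SLT_1$-failure follows as above because arbitrarily long words force $a,b\in B\cap I\cap E$, after which $aa$ (or a word of the form $a^{m}ba^{m}$ with too few $b$'s) would be wrongly accepted.

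Finally I would note that all these witnesses are of bounded (indeed very small) alphabet and length complexity, so the arguments are uniform, and conclude that $\SLT_1$ is incomparable to $\REG_i^Z$ for each $i\in\{2,3,4\}$. I would close by remarking that the case $i=5$ is genuinely different — by Lemma~\ref{lem-slt1-reg5} we have $\SLT_1\subseteq\REG_5^Z$ — which explains why the statement is restricted to $i\in\{2,3,4\}$.
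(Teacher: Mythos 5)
Your overall two-directional plan is right, and one half coincides with the paper's proof: the language $L_2$ of Lemma~\ref{l-l2} lies in $\SLT_1\setminus\REG_4^Z$, and since $\REG_2^Z\subseteq\REG_3^Z\subseteq\REG_4^Z$ this single witness gives $\SLT_1\not\subseteq\REG_i^Z$ for all $i\in\sets{2,3,4}$. The gap is in the converse direction. You claim that a witness for $\REG_i^Z\not\subseteq\SLT_1$ must have state complexity \emph{exactly} $i$ and that a language ``requiring fewer than $i$ states would prove nothing''; this misreads the definition $\REG_n^Z=\Set{L}{L\in\REG \mbox{ with } \State(L)\leq n}$, which is an ``at most'' condition, so these families grow with $n$. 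Hence the one language $L_1=\{a\}^*\{b\}\{a,b\}^*$ of Lemma~\ref{l-l1}, which lies in $\REG_2^Z\subseteq\REG_3^Z\subseteq\REG_4^Z$ and is not even in $\SLT$, already settles all three cases---the same use of monotonicity you yourself make with $L_2$ in the other direction, and exactly what the paper does with only the two witnesses $L_1$ and $L_2$.

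Moreover, the extra witnesses you introduce in order to hit ``exactly $i$ states'' do not have the state complexity you claim. For $L^{(3)}=\{a\}^*\{b\}^2\{a,b\}^*$ the minimal deterministic automaton is not a three-state device counting consecutive $b$'s: once the first $b$ is followed by an $a$, no continuation can put the word into the language, so a trap state is unavoidable, and the words $\lambda$, $b$, $bb$, $ba$ are pairwise inequivalent under the Myhill--Nerode relation, forcing four states. Thus $L^{(3)}\notin\REG_3^Z$ and it cannot witness $\REG_3^Z\not\subseteq\SLT_1$; similarly $L^{(4)}=\{a\}^*\{b\}^3\{a,b\}^*$ needs five states and is not in $\REG_4^Z$. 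So your argument for $i=3$ and $i=4$ in this direction fails as written. The repair is simply to drop these witnesses and invoke $L_1$ together with the inclusions $\REG_2^Z\subseteq\REG_i^Z$, which reduces your proof to the paper's.
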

\begin{proof}
Due to the inclusion relations, it suffices to show that there is a language in the set $\REG_2^Z\setminus\SLT_1$
and a language in the set $\SLT_1\setminus\REG_4^Z$. A language for the first case is $L_1=\{a\}^*\{b\}\{a,b\}^*$
as shown in Lemma~\ref{l-l1}. A language for the second case is $L_2=\slt{\{a,b\}}{\{b,c\}}{\{a,c\}}{\emptyset}$
as shown in Lemma~\ref{l-l2}.
\end{proof}

\begin{lemma}\label{lem-sltk-regn}
The classes $\SLT_k$ for $k\geq 2$ and $\SLT$ are incomparable to the classes $\REG_n^Z$ for $n\geq 2$.
\end{lemma}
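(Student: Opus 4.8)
The plan is to prove incomparability by exhibiting, for each relevant pair of classes, one language in each set difference. By the inclusion chain $\REG_2^Z\subseteq\REG_3^Z\subseteq\cdots$, it suffices to find (a) a language in $\SLT_k$ (respectively in $\SLT$) that lies outside every $\REG_n^Z$, which amounts to finding, for each $n$, a language in $\SLT_2\setminus\REG_n^Z$; and (b) a language in some $\REG_n^Z$ with $n\geq 2$ that lies outside $\SLT$, hence outside every $\SLT_k$.

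For direction (b) I would simply invoke Lemma~\ref{l-l1}: the language $L_1=\{a\}^*\{b\}\{a,b\}^*$ is in $\REG_2^Z\setminus\SLT$, and since $\REG_2^Z\subseteq\REG_n^Z$ for all $n\geq 2$ and $\SLT_k\subseteq\SLT$ for all $k$, this single witness shows $L_1\in\REG_n^Z\setminus\SLT_k$ and $L_1\in\REG_n^Z\setminus\SLT$ for every $n\geq 2$ and every $k\geq 1$. For direction (a), the natural witnesses are the singleton languages $L_{3,n}$ from Lemma~\ref{l-l3}: for each $n\geq 2$ we have $L_{3,n+1}\in\SLT_2\setminus\REG_{n+1}^Z$, and since $\SLT_2\subseteq\SLT_k$ (for $k\geq 2$, after rewriting the defining sets using $V^{k-1}$-blocks, as is standard) and $\SLT_2\subseteq\SLT$, while $\REG_n^Z\subseteq\REG_{n+1}^Z$ so that $L_{3,n+1}\notin\REG_n^Z$, each such language witnesses that $\SLT_2$ (hence $\SLT_k$ for $k\geq 2$, hence $\SLT$) is not contained in $\REG_n^Z$. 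Taking the union over all $n\geq 2$ gives that none of $\SLT_k$ (for $k\geq 2$) or $\SLT$ is included in any $\REG_n^Z$.

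Concretely, I would phrase the proof as: fix $n\geq 2$. On one hand, $L_1\in\REG_2^Z\setminus\SLT\subseteq\REG_n^Z\setminus\SLT_k$ for every $k\geq 1$, so no $\REG_n^Z$ is contained in $\SLT_k$ or in $\SLT$. On the other hand, the language $L_{3,n+1}=\{a_1a_2\cdots a_n\}$ over the alphabet $\{a_1,\ldots,a_n\}$ belongs to $\SLT_2$ by Lemma~\ref{l-l3} but requires at least $n+2>n$ states, so $L_{3,n+1}\notin\REG_n^Z$; since $\SLT_2\subseteq\SLT_k$ for $k\geq 2$ and $\SLT_2\subseteq\SLT$, this language lies in $(\SLT_k\setminus\REG_n^Z)\cap(\SLT\setminus\REG_n^Z)$. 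Hence for every $k\geq 2$ and every $n\geq 2$, the classes $\SLT_k$ and $\REG_n^Z$ are incomparable, and likewise $\SLT$ and $\REG_n^Z$ are incomparable.

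The only mild obstacle is the inclusion $\SLT_2\subseteq\SLT_k$ for $k\geq 2$, which is needed so that the $L_{3,n}$ witnesses transfer from $\SLT_2$ up to all higher $\SLT_k$ and to $\SLT$; this is a routine fact (a strictly locally $2$-testable language is strictly locally $k$-testable for any larger $k$, obtained by replacing the length-$2$ blocks in $B$, $I$, $E$ by the appropriate sets of length-$k$ factors and absorbing short words into $F$), and I would state it as such rather than proving it in detail. Everything else is bookkeeping with the already-established inclusion chains and Lemmas~\ref{l-l1} and~\ref{l-l3}.
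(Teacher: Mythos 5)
Your proposal is correct and follows essentially the same route as the paper: both directions are settled by the witnesses $L_1\in\REG_2^Z\setminus\SLT$ (Lemma~\ref{l-l1}) and the singleton languages of Lemma~\ref{l-l3} in $\SLT_2\setminus\REG_n^Z$, combined with the known inclusion chains. The only difference is cosmetic: your index shift to $L_{3,n+1}\notin\REG_{n+1}^Z\supseteq\REG_n^Z$ is unnecessary, since Lemma~\ref{l-l3} already places $L_{3,n}$ itself in $\SLT_2\setminus\REG_n^Z$.
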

\begin{proof}
Due to the inclusion relations, it suffices to show that there is a language in the set $\REG_2^Z\setminus\SLT$
and a language in each set $\SLT_2\setminus\REG_n^Z$ for $n\geq 2$. A language for the first case is $L_1=\{a\}^*\{b\}\{a,b\}^*$
as shown in Lemma~\ref{l-l1}. Languages for the second case are 
$L_{3,n}=\{a_1a_2\ldots a_{n-1}\}$ as shown in Lemma~\ref{l-l3}.
\end{proof}

\begin{lemma}\label{lem-sltk-rlpn}
The classes $\SLT_k$ for $k\geq 1$ are incomparable to the classes $\RL_n^P$ for $n\geq 1$.
\end{lemma}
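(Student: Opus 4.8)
The plan is to show incomparability by exhibiting one language in $\RL_n^P \setminus \SLT_k$ and one in $\SLT_k \setminus \RL_n^P$, for arbitrary $k\geq 1$ and $n\geq 1$. Since both $\SLT_k \subseteq \SLT$ and the families $\RL_n^P$ form chains, it is enough to find witnesses that work uniformly. Fortunately, both directions have already been prepared among Lemmas~\ref{l-l4}--\ref{l-l5}, so the proof reduces to citing them correctly.

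For the first direction, I would take the language $L_{4,k}=\{a^k\}$ from Lemma~\ref{l-l4}, which belongs to $\RL_1^P\setminus\SLT_k$. Since $\RL_1^P\subseteq\RL_n^P$ for every $n\geq 1$, we get $L_{4,k}\in\RL_n^P$, while $L_{4,k}\notin\SLT_k$ directly from Lemma~\ref{l-l4}. This handles every pair $(k,n)$ with $k\geq 1$, $n\geq 1$, because the single-word language $\{a^k\}$ is generated by the one rule $S\to a^k$ and, if it were strictly locally $k$-testable, the sets $B$ and $E$ would both have to equal $\{a^k\}$, forcing $a^{k+1}$ into the language as well.

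For the second direction, I would use the monoidal language $L_{5,n}=V_n^*$ over an $n$-letter alphabet from Lemma~\ref{l-l5}, which lies in $\SLT_1\setminus\RL_n^P$. Since $\SLT_1\subseteq\SLT_k$ for all $k\geq 1$, we have $L_{5,n}\in\SLT_k$, and $L_{5,n}\notin\RL_n^P$ by Lemma~\ref{l-l5} (generating $V_n^*$ needs a non-terminating rule for each of the $n$ letters plus at least one terminating rule, hence strictly more than $n$ rules). Thus neither class is contained in the other. The only mild subtlety — hardly an obstacle — is bookkeeping the inclusions $\SLT_1\subseteq\SLT_k$ and $\RL_1^P\subseteq\RL_n^P$ so that the two fixed witness constructions cover all parameter values simultaneously; no new argument is required beyond what the cited lemmas already establish.

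\begin{proof}
Due to the inclusion relations $\SLT_1\subseteq\SLT_k\subseteq\SLT$ and $\RL_1^P\subseteq\RL_n^P$, it suffices to exhibit, for arbitrary $k\geq 1$ and $n\geq 1$, a language in $\RL_n^P\setminus\SLT_k$ and a language in $\SLT_k\setminus\RL_n^P$.

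For the first case, consider $L_{4,k}=\{a^k\}$ from Lemma~\ref{l-l4}. By that lemma, $L_{4,k}\in\RL_1^P\setminus\SLT_k$, and since $\RL_1^P\subseteq\RL_n^P$, we obtain $L_{4,k}\in\RL_n^P\setminus\SLT_k$.

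For the second case, consider $L_{5,n}=V_n^*$ over the $n$-letter alphabet $V_n=\{a_1,a_2,\ldots,a_n\}$ from Lemma~\ref{l-l5}. By that lemma, $L_{5,n}\in\SLT_1\setminus\RL_n^P$, and since $\SLT_1\subseteq\SLT_k$, we obtain $L_{5,n}\in\SLT_k\setminus\RL_n^P$.

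Hence $\SLT_k$ and $\RL_n^P$ are incomparable for all $k\geq 1$ and $n\geq 1$.
\end{proof}
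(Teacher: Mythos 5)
Your proof is correct and follows essentially the same route as the paper: both use $L_{4,k}=\{a^k\}$ from Lemma~\ref{l-l4} and $L_{5,n}=V_n^*$ from Lemma~\ref{l-l5} as witnesses, with the chain inclusions $\RL_1^P\subseteq\RL_n^P$ and $\SLT_1\subseteq\SLT_k$ carrying the witnesses to all parameter values. You merely spell out the inclusion bookkeeping that the paper compresses into ``due to the inclusion relations.''
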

\begin{proof}
Due to the inclusion relations, it suffices to show that there is a language in the set $\RL_1^P\setminus\SLT_k$ for
every $k\geq 1$ and a language in each set $\SLT_1\setminus\RL_n^P$ for $n\geq 1$. Languages for the first 
case are $L_{4,k}=\{a^k\}$ for $k\geq 1$ as shown in Lemma~\ref{l-l4}. Languages for the second case are 
$L_{5,n}=\{a_1,a_2,\ldots, a_n\}^*$ as shown in Lemma~\ref{l-l5}.
\end{proof}

\begin{lemma}\label{lem-slt1-rlv2}
The class $\SLT_1$ is properly included in the class $\RL_2^V$.
\end{lemma}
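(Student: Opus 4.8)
The plan is to prove the two halves separately: first the inclusion $\SLT_1 \subseteq \RL_2^V$, then its properness via a witness language already introduced in the excerpt.

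For the inclusion, I would take an arbitrary strictly locally $1$-testable language $L$ over an alphabet $V$, so that $L = \slt{B}{I}{E}{F}$ with $B, I, E \subseteq V$ and $F \subseteq \sets{\lambda}$, and I would construct a right-linear grammar $G$ generating $L$ that uses only two non-terminal symbols, say $S$ (the start symbol) and one auxiliary symbol $A$. The idea is that a nonempty word $a_1 a_2 \ldots a_n \in L$ of length $n \geq 1$ must have $a_1 \in B$, $a_n \in E$, and $a_j \in I$ for $2 \leq j \leq n-1$; the length-one words in $L$ are exactly the letters in $B \cap E$, and $\lambda \in L$ iff $\lambda \in F$. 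So I would put into $P$ the rules $S \ra x$ for each $x \in B \cap E$ (covering words of length $1$), the rules $S \ra xA$ for each $x \in B$ and $A \ra xA$ for each $x \in I$ (building the interior) and $A \ra x$ for each $x \in E$ (finishing), plus $S \ra \lambda$ if $\lambda \in F$. A short argument then shows $L(G) = L$: any derivation spells out a first letter from $B$, then letters from $I$, then a final letter from $E$ (or is a single letter from $B\cap E$, or is $\lambda$), which is exactly the membership condition for $\slt{B}{I}{E}{F}$; conversely every word of $L$ is obtained this way. Since $\Var(G) = 2$, this gives $L \in \RL_2^V$, hence $\SLT_1 \subseteq \RL_2^V$.

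For properness, I need a language in $\RL_2^V \setminus \SLT_1$. The language $L_1 = \{a\}^*\{b\}\{a,b\}^*$ from Lemma~\ref{l-l1} works: it was shown there to lie outside $\SLT$, hence outside $\SLT_1$, and it is generated by the right-linear grammar $G = (\{S, A\}, \{a,b\}, P, S)$ with $P = \{S \ra aS,\ S \ra bA,\ A \ra aA,\ A \ra bA,\ A \ra b,\ A\ra a\}$ (or a similar two-non-terminal grammar), so $L_1 \in \RL_2^V$. This establishes that the inclusion is proper.

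The only mildly delicate point is the handling of the boundary cases in the grammar construction — the words of length exactly $1$, which must be in $B \cap E$ (not merely in $B$ and reachable), and the empty word, which is governed solely by $F$. Getting these right is what the rules $S \ra x$ for $x \in B \cap E$ and the conditional rule $S \ra \lambda$ take care of; apart from that the construction and its verification are entirely routine, so I would state the grammar, note these boundary rules explicitly, and leave the straightforward inductive verification of $L(G)=L$ to the reader or dispatch it in a sentence.
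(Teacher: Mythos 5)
Your inclusion argument is essentially identical to the paper's: the paper constructs the right-linear grammar $G=(\{S,S'\},T,P,S)$ with $S\ra w$ for $w\in F\cup(B\cap E)$, $S\ra wS'$ for $w\in B$, $S'\ra wS'$ for $w\in I$, and $S'\ra w$ for $w\in E$, generating $F\cup(B\cap E)\cup BI^*E=L$, which is exactly your construction (your separate rule $S\ra\lambda$ when $\lambda\in F$ is the same thing, since $F\subseteq\sets{\lambda}$ for $\SLT_1$). Where you diverge is the properness witness: the paper reuses $L_6=\{a\}$ from Lemma~\ref{l-l6}, which is already known to lie in $\RL_1^V\subseteq\RL_2^V$ but not in $\SLT_1$, so no new grammar needs to be exhibited; you instead use $L_1=\{a\}^*\{b\}\{a,b\}^*$ from Lemma~\ref{l-l1}, which also works since $L_1\notin\SLT\supseteq\SLT_1$, but then you owe a proof that $L_1\in\RL_2^V$, and the specific grammar you wrote down does not generate the word $b$ (there is no terminating rule from $S$); adding $S\ra b$ (or replacing $A\ra a,\ A\ra b$ appropriately, e.g.\ $S\ra aS\mid bA\mid b$ and $A\ra aA\mid bA\mid a\mid b$) fixes this. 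So your route is correct up to that easily repaired slip, but the paper's choice of witness is slightly more economical because it leans entirely on an earlier lemma; also note that the simpler, stronger witness $L_6$ shows properness already at the level of $\RL_1^V$.
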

\begin{proof}
Let $L=\slt{B}{I}{E}{F}$ be a strictly locally 1-testable language over an alphabet $T$. 
We construct a right-linear grammar $G=(\{S,S'\},T,P,S)$ with the rules 
\begin{itemize}
\item $S\ra w$ for every word $w\in F\cup (B\cap E)$,
\item $S\ra wS'$ for every word $w\in B$,
\item $S'\ra wS'$ for every word $w\in I$, and
\item $S'\ra w$ for every word $w\in E$.
\end{itemize}
The language $L(G)$ generated is $F\cup(B\cap E)\cup (BI^*E)$ which is $L$. Hence, $L\in\RL_2^V$
and $\SLT_1\subseteq\RL_2^V$. A witness language for the properness of the inclusion is $L_6=\{a\}$ for which was proved in
Lemma~\ref{l-l6} that it belongs to the set $\RL_1^V$ and therefore also to $\RL_2^V$ but not to $\SLT_1$.
\end{proof}

\begin{lemma}\label{lem-slt1-rlv1}
The class $\SLT_1$ is incomparable to the class $\RL_1^V$.
\end{lemma}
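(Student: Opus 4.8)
The plan is to prove incomparability by exhibiting a language in each of the two set differences $\RL_1^V \setminus \SLT_1$ and $\SLT_1 \setminus \RL_1^V$, since neither family is a priori contained in the other. Both witness languages are already available from the earlier lemmas, so the proof amounts to citing them correctly.

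For the direction $\RL_1^V \setminus \SLT_1 \neq \emptyset$, I would use the language $L_6 = \{a\}$ from Lemma~\ref{l-l6}. There it was shown that $L_6$ can be generated by a right-linear grammar with a single non-terminal symbol (the rule $S \ra a$), so $L_6 \in \RL_1^V$, while an easy argument shows $L_6 \notin \SLT_1$: if $L_6 = \slt{B}{I}{E}{F}$ then necessarily $B = E = \{a\}$, forcing $aa \in L_6$, a contradiction. For the direction $\SLT_1 \setminus \RL_1^V \neq \emptyset$, I would use the language $L_7 = \{a\}\{b\}^*\{a\} \cup \{a\}$ from Lemma~\ref{l-l7}, which is strictly locally $1$-testable (witnessed by $\slt{\{a\}}{\{b\}}{\{a\}}{\emptyset}$) but was shown there not to be generable by any right-linear grammar with only one non-terminal, via a pumping-style argument on the maximal right-hand side length.

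Assembling these two facts immediately gives that $\SLT_1 \not\subseteq \RL_1^V$ and $\RL_1^V \not\subseteq \SLT_1$, hence the two classes are incomparable. I do not anticipate any real obstacle here, since all the hard work was done in Lemmas~\ref{l-l6} and~\ref{l-l7}; the only care needed is to make sure the quoted witnesses land in the correct set differences and to note explicitly that incomparability follows from having a separating language on each side.

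\begin{proof}
It suffices to exhibit a language in $\RL_1^V\setminus\SLT_1$ and a language in $\SLT_1\setminus\RL_1^V$. For the first, the language $L_6=\{a\}$ belongs to $\RL_1^V\setminus\SLT_1$ by Lemma~\ref{l-l6}. For the second, the language $L_7=\{a\}\{b\}^*\{a\}\cup\{a\}$ belongs to $\SLT_1\setminus\RL_1^V$ by Lemma~\ref{l-l7}. Hence neither of the two classes is contained in the other, so $\SLT_1$ and $\RL_1^V$ are incomparable.
\end{proof}
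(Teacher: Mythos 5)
Your proof is correct and follows exactly the paper's own argument: it cites Lemma~\ref{l-l6} for $L_6=\{a\}\in\RL_1^V\setminus\SLT_1$ and Lemma~\ref{l-l7} for $L_7=\{a\}\{b\}^*\{a\}\cup\{a\}\in\SLT_1\setminus\RL_1^V$, which together establish incomparability. No gaps; the witnesses are placed in the correct set differences, just as in the paper.
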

\begin{proof}
There is a language in the set $\RL_1^V\setminus\SLT_1$, namely $L_6=\{a\}$
as shown in Lemma~\ref{l-l6}, and a language in the set $\SLT_1\setminus\RL_1^V$, 
namely $L_7=\{a\}\{b\}^*\{a\}\cup\{a\}$ as shown in Lemma~\ref{l-l7}.
\end{proof}

\begin{lemma}\label{lem-sltk-rlvn}
The classes $\SLT_k$ for $k\geq 2$ and $\SLT$ are incomparable to the classes $\RL_n^V$ for $n\geq 1$.
\end{lemma}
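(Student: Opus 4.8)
The plan is to prove incomparability of the classes $\SLT_k$ (for $k\geq 2$) and $\SLT$ with each class $\RL_n^V$ (for $n\geq 1$) by exhibiting witness languages in both directions, exactly as in the companion Lemmas~\ref{lem-sltk-rlpn} and~\ref{lem-slt1-rlv1}. Since $\SLT_2\subseteq\SLT_k\subseteq\SLT$ for all $k\geq 2$, it suffices to find, first, one language that lies in $\RL_1^V$ (hence in every $\RL_n^V$) but outside $\SLT$ (hence outside every $\SLT_k$ for $k\geq 2$ and outside $\SLT$), and second, for each $n\geq 1$ a language lying in $\SLT_2$ (hence in $\SLT_k$ for every $k\geq 2$ and in $\SLT$) but outside $\RL_n^V$. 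Both ingredients are already available in the excerpt.

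For the direction $\RL_1^V\setminus\SLT$, the witness is $L_8=\set{a^{3m}}{m\geq 1}$ from Lemma~\ref{l-l8}: it was shown there that $L_8\in\RL_1^V$ (generated by the grammar with the single non-terminal $S$ and rules $S\ra a^3S$, $S\ra a^3$) and that $L_8\notin\SLT$. As $\RL_1^V\subseteq\RL_n^V$ for every $n\geq 1$, the language $L_8$ witnesses the non-inclusion $\RL_n^V\not\subseteq\SLT_k$ and $\RL_n^V\not\subseteq\SLT$ simultaneously.

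For the opposite direction, the witnesses are the languages $L_{9,n}=\{a_1\}^+\{a_2\}^+\cdots\{a_{n+1}\}^+$ from Lemma~\ref{l-l9}, for which it was proved that $L_{9,n}\in\SLT_2$ (via the explicit $\slt{B}{I}{E}{F}$ representation given there) and that $L_{9,n}\notin\RL_n^V$, since generating $L_{9,n}$ requires at least one non-terminal symbol per letter $a_i$ with $1\leq i\leq n+1$. Because $\SLT_2\subseteq\SLT_k$ for $k\geq 2$ and $\SLT_2\subseteq\SLT$, each $L_{9,n}$ witnesses $\SLT_k\not\subseteq\RL_n^V$ and $\SLT\not\subseteq\RL_n^V$ for the corresponding $n$.

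Combining the two directions gives the incomparability of $\SLT_k$ (for every $k\geq 2$) and of $\SLT$ with $\RL_n^V$ for every $n\geq 1$. There is essentially no obstacle here: the entire argument is a bookkeeping assembly of previously established membership and non-membership facts together with the obvious inclusions $\SLT_2\subseteq\SLT_k\subseteq\SLT$ and $\RL_1^V\subseteq\RL_n^V$; the only point requiring a moment's care is to note that the single language $L_8$ handles all values of $k$ at once while a separate $L_{9,n}$ is needed for each $n$.

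\begin{proof}
Due to the inclusion relations $\SLT_2\subseteq\SLT_k\subseteq\SLT$ for $k\geq 2$ and $\RL_1^V\subseteq\RL_n^V$ for $n\geq 1$, it suffices to show that there is a language in the set $\RL_1^V\setminus\SLT$ and a language in each set $\SLT_2\setminus\RL_n^V$ for $n\geq 1$. A language for the first case is $L_8=\set{a^{3m}}{m\geq 1}$ as shown in Lemma~\ref{l-l8}. Languages for the second case are $L_{9,n}=\{a_1\}^+\{a_2\}^+\cdots\{a_{n+1}\}^+$ as shown in Lemma~\ref{l-l9}.
\end{proof}
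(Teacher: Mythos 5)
Your proof is correct and follows essentially the same route as the paper: both reduce the claim via the inclusion relations to exhibiting $L_8\in\RL_1^V\setminus\SLT$ (Lemma~\ref{l-l8}) and $L_{9,n}\in\SLT_2\setminus\RL_n^V$ (Lemma~\ref{l-l9}). Your version merely spells out the inclusions $\SLT_2\subseteq\SLT_k\subseteq\SLT$ and $\RL_1^V\subseteq\RL_n^V$ that the paper leaves implicit.
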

\begin{proof}
Due to the inclusion relations, it suffices to show that there is a language in the 
set $\RL_1^V\setminus\SLT$ and a language in the set $\SLT_2\setminus\RL_n^V$ for every number $n\geq 1$. 
A language for the first case is $L_8=\set{a^{3m}}{m\geq 1}$ as shown in Lemma~\ref{l-l8}. 
A language for the second case is $L_{9,n}=\{a_1\}^+\{a_2\}^+\cdots\{a_{n+1}\}^+$
as shown in Lemma~\ref{l-l9}.
\end{proof}

A summary of the inclusion relations is given in Figure~\ref{fig-subreg-hier}.
An edge label in this figure refers to the paper or lemma above 
where the respective inclusion is proved. 

\begin{figure}[htb]
\centerline{%
\scalebox{0.75}{\begin{tikzpicture}[node distance=15mm and 25mm,on grid=true,
background rectangle/.style=
{%draw=yellow!80,fill=yellow!20,
draw=black!80,
rounded corners=1ex},
show background rectangle]
\node (REG) {$\REG$};
\node (dummy4) [below left=of REG] {};
\node (PS) [left=of dummy4] {$\PS$};
\node (NC) [below=of PS] {$\NC$};
\node (dummy1) [below=of NC] {};
\node (ORD) [below=of dummy1] {$\ORD$};
\node (dummy2) [below=of ORD] {};
\node (DEF) [below=of dummy2] {$\DEF$};
\node (COMB) [below right=of DEF] {$\COMB$};
\node (NIL) [below=of DEF] {$\NIL$};
\node (FIN) [below right=of NIL] {$\FIN$};
\node (SUF) [below left=of ORD] {$\SUF$};
\node (COMM) [below left=of SUF] {$\COMM$};
\node (dummy3) [above=of COMM] {};
\node (CIRC) [above=of dummy3] {$\CIRC$};
\node (SLT1) [right=of DEF] {$\SLT_1$};
\node (SLT2) [above=of SLT1] {$\SLT_2$};
\node (SLTn) [above=of SLT2] {$\vdots$};
\node (SLT) [above=of SLTn] {$\SLT$};
\node (V1) [right=of SLT1] {$\RL_1^V$};
\node (V2) [above=of V1] {$\RL_2^V$};
\node (Vn) [right=of SLTn] {$\vdots$};
\node (Z2) [right=of V1] {$\REG_2^Z$};
\node (Z34) [right=of Vn] {$\vdots$};
\node (Z5) [above=of Z34] {$\REG_5^Z$};
\node (Zn) [above=of Z5] {$\vdots$};
\node (dummy) [right=of Z2] {};
\node (d1) [below=of dummy] {};
\node (d2) [below=of d1] {};
\node (P1) [below left=of d2] {$\RL_1^P$};
\node (P2) [above=of P1] {$\RL_2^P$};
\node (P3) [above right=of P2] {$\RL_3^P$};
\node (P4) [above=of P3] {$\RL_4^P$};
\node (Pn) [right=of Z34] {$\vdots$};
\node (UF) [right=of Pn] {$\UF$};
\node (d3) [below=of FIN] {};
\node (Z1) [below right=of d3] {$\REG_1^Z$};
\node (MON) [below=of Z1] {$\MON$};
\draw[hier] (FIN) to node[edgeLabel]{\footnotesize{\cite{Wi78}}} 
  (NIL);
\draw[hier] (MON) to node[edgeLabel]{\footnotesize{\cite{Tru18-TRsubreg}}} 
  (Z1);
\draw[hier] (Z1) [bend right=-15] to node[edgeLabel]{\footnotesize{\cite{Tru18-TRsubreg}}} 
  (NIL);
\draw[hier] (Z1) [bend right=-29] to node[edgeLabel]{\footnotesize{\cite{Tru18-TRsubreg}}} 
  (SUF);
\draw[hier] (Z1) [bend right=-29] to node[edgeLabel]{\footnotesize{\cite{Tru18-TRsubreg}}} 
  (COMM);
\draw[hier] (Z1) [bend right=39] to node[edgeLabel,pos=.3]{\footnotesize{\cite{Tru18-TRsubreg}}} 
  (UF);
\draw[hier] (Z1) [bend right=0] to node[edgeLabel]{\footnotesize{\cite{Tru18-TRsubreg}}} 
  (Z2);
\draw[hier] (Z1) [bend right=10] to node[edgeLabel,pos=.2]{\footnotesize{\cite{DasTru22-ncma}}} 
  (SLT1);
\draw[hier] (P1) to node[edgeLabel,pos=.5]{\footnotesize{\cite{Tru18-TRsubreg}}} 
  (FIN);
\draw[hier] (P1) [bend right=30] to node[edgeLabel,pos=.3]{\footnotesize{\cite{Tru18-TRsubreg}}} 
  (UF);
\draw[hier] (V1) to node[edgeLabel,pos=.4]{\footnotesize{\cite{Tru18-TRsubreg}}} 
  (V2);
\draw[hier] (V2) to node[edgeLabel,pos=.4]{\footnotesize{\cite{Tru18-TRsubreg}}} 
  (Vn);
\draw[hier] (Vn) to node[edgeLabel]{\footnotesize{\cite{Tru18-TRsubreg}}} 
  (REG);
\draw[hier] (Z2) to node[edgeLabel,pos=.4]{\footnotesize{\cite{Tru18-TRsubreg}}} 
  (V2);
\draw[hier] (Z2) to node[edgeLabel,pos=.4]{\footnotesize{\cite{Tru18-TRsubreg}}} 
  (Z34);
\draw[hier] (Z34) to %node[edgeLabel,pos=.4]{\footnotesize{\cite{Tru18-TRsubreg}}}
  (Z5);
\draw[hier] (Z5) to %node[edgeLabel,pos=.4]{\footnotesize{\cite{Tru18-TRsubreg}}} 
  (Zn);
\draw[hier] (Zn) [bend right=15] to node[edgeLabel]{\footnotesize{\cite{Tru18-TRsubreg}}} 
  (REG);
\draw[hier] (P1) to node[edgeLabel]{\footnotesize{\cite{Tru18-TRsubreg}}} 
  (P2);
\draw[hier] (P2) to node[edgeLabel,pos=.4]{\footnotesize{\cite{Tru18-TRsubreg}}} 
  (P3);
\draw[hier] (P2) %[bend left=15] 
  to node[edgeLabel]{\footnotesize{\cite{Tru18-TRsubreg}}} 
  (V1);
\draw[hier] (P3) to node[edgeLabel,pos=.4]{\footnotesize{\cite{Tru18-TRsubreg}}} 
  (P4);
\draw[hier] (P4) to node[edgeLabel,pos=.4]{\footnotesize{\cite{Tru18-TRsubreg}}} 
  (Pn);
\draw[hier] (P4) %[bend right=8] 
  to node[edgeLabel,pos=.2]{\footnotesize{\cite{Tru18-TRsubreg}}} 
  (V2);
\draw[hier] (Pn) [bend right=22] to node[edgeLabel]{\footnotesize{\cite{Tru18-TRsubreg}}} 
  (REG);
\draw[hier] (NIL) to node[edgeLabel,pos=.4]{\footnotesize{\cite{Wi78}}} 
  (DEF);
\draw[hier] (NIL) [bend left=-6] 
  to node[edgeLabel,pos=.15]{\footnotesize{\cite{Tru18-TRsubreg}}} 
  (V1);
\draw[hier] (COMB) %[bend left=20] 
  to node[edgeLabel,pos=.7]{\footnotesize{\cite{Ha69}}} 
  (DEF);
\draw[hier] (COMB) to node[edgeLabel,pos=.4]{\footnotesize{\cite{Tru18-TRsubreg}}} 
  (V1);
\draw[hier] (COMB) [bend right=4] to node[edgeLabel,pos=.3]{\footnotesize{\cite{Tru18-TRsubreg}}} 
  (Z2);
\draw[hier] (COMB) to node[edgeLabel,pos=.6]{\footnotesize{\cite{DasTru22-ncma}}} 
  (SLT1);
\draw[hier] (SLT1) to node[edgeLabel,pos=.6]{\footnotesize{\cite{SCR-PSP-11}}} 
  (SLT2);
\draw[hier] (SLT1) to node[edgeLabel,pos=.4]{\footnotesize{\cite{DasTru22-ncma}}} 
  (ORD);
\draw[hier] (SLT1) [bend left=15] to node[edgeLabel,pos=.3]{\footnotesize{\ref{lem-slt1-reg5}}} 
  (Z5);
\draw[hier] (SLT1) [bend right=10] to node[edgeLabel,pos=.3]{\footnotesize{\ref{lem-slt1-rlv2}}} 
  (V2);
\draw[hier] (SLT2) to node[edgeLabel]{\footnotesize{\cite{SCR-PSP-11}}} 
  (SLTn);
\draw[hier] (SLTn) to node[edgeLabel]{\footnotesize{\cite{SCR-PSP-11}}} 
  (SLT);
\draw[hier] (SLT) to node[edgeLabel]{\footnotesize{\cite{McNPap71}}} 
  (NC);
\draw[hier] (ORD) to node[edgeLabel,pos=.4]{\footnotesize{\cite{ShyThi74-ORD}}} 
  (NC);
\draw[hier] (DEF) to node[edgeLabel,pos=.4]{\footnotesize{\cite{HolTru15-ncma}}} 
  (ORD);
\draw[hier] (DEF) to node[edgeLabel,pos=.7]{\footnotesize{\cite{DasTru22-ncma}}} 
  (SLT);
\draw[hier] (DEF) [bend left=-6] to node[edgeLabel,pos=.2]{\footnotesize{\cite{Tru18-TRsubreg}}} 
  (V2);
\draw[hier] (NC) to node[edgeLabel,pos=.4]{\footnotesize{\cite{ShyThi74-PS}}} 
  (PS);
\draw[hier] (PS) [bend right=-8] to node[edgeLabel]{\footnotesize{\cite{HolTru15-ncma}}} 
  (REG);
\draw[hier] (SUF) [bend right=-22] to node[edgeLabel]{\footnotesize{\cite{HolTru15-ncma}}} 
  (PS);
\draw[hier] (COMM) to node[edgeLabel,pos=.4]{\footnotesize{\cite{HolTru15-ncma}}} 
  (CIRC);
\draw[hier] (CIRC) [bend right=-29] to node[edgeLabel]{\footnotesize{\cite{HolTru15-ncma}}} 
  (REG);
\draw[hier] (UF) [bend right=29] to node[edgeLabel]{\footnotesize{\cite{HolTru15-ncma}}} 
  (REG);
\end{tikzpicture}}}
\caption{Hierarchy of subregular language families}\label{fig-subreg-hier}
\end{figure}

\begin{theorem}\label{th-subreg-hier}
The inclusion relations presented in Figure~\ref{fig-subreg-hier} hold.
An arrow from an entry~$X$ to an entry~$Y$ depicts the proper inclusion $X\subset Y$;
if two families are not connected by a directed path, then they are incomparable.
\end{theorem}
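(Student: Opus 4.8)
The plan is to verify that Figure~\ref{fig-subreg-hier} correctly records every inclusion and every incomparability among the families of $\mathfrak{F}$. I would organize the argument into three parts: (i) the inclusions asserted by arrows, (ii) the properness of those inclusions, and (iii) the incomparabilities of families not joined by a directed path. Most of part (i) is already available: the arrows labelled with external references are established in the cited literature, and the arrows labelled \ref{lem-slt1-reg5} and \ref{lem-slt1-rlv2} are exactly Lemmas~\ref{lem-slt1-reg5} and~\ref{lem-slt1-rlv2}. So the task reduces to collecting these, checking transitivity closure (i.e., that no arrow is missing that would be forced by composing two others, and that the drawn arrows are mutually consistent), and supplying the new separation arguments.

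For part (ii), the properness of each new inclusion is witnessed by the languages $L_1,\dots,L_9$ constructed above: for instance $\SLT_1\subsetneq\REG_5^Z$ is witnessed by $L_1$ (Lemma~\ref{l-l1}), and $\SLT_1\subsetneq\RL_2^V$ by $L_6$ (Lemma~\ref{l-l6}); properness of the remaining (previously known) arrows is taken from the references. For part (iii), the incomparabilities involving the $\SLT_k$ and $\SLT$ families on one side and the resource-bounded families $\REG_n^Z$, $\RL_n^V$, $\RL_n^P$ on the other are precisely Lemmas~\ref{lem-slt1-reg234}, \ref{lem-sltk-regn}, \ref{lem-sltk-rlpn}, \ref{lem-slt1-rlv1}, and~\ref{lem-sltk-rlvn}; the incomparabilities among the structurally-defined families (e.g.\ $\COMM$ versus $\DEF$, $\UF$ versus $\ORD$, etc.) and among the resource-bounded families themselves are inherited from \cite{HolTru15-ncma,Tru18-TRsubreg,DasTru22-ncma} and the other cited works. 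I would simply walk through the figure family by family, citing the appropriate lemma or reference for each relation, and for each non-adjacent pair exhibit (or cite) a language in each of the two set differences.

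The main obstacle is bookkeeping rather than mathematics: one must check that the diagram is \emph{complete}, i.e.\ that every pair of families is either connected by a directed path (in which case the claimed inclusion must actually hold and be proved) or genuinely incomparable (in which case two separating witnesses must exist). The delicate cases are those where a family like $\SLT_1$ sits ``between'' several resource-bounded families: one has $\SLT_1\subset\REG_5^Z$ and $\SLT_1\subset\RL_2^V$ but $\SLT_1$ incomparable to $\REG_2^Z,\REG_3^Z,\REG_4^Z$ and to $\RL_1^V$, and these must be shown not to contradict the inclusions $\REG_2^Z\subset\RL_2^V$, $\RL_1^V\subset\RL_2^V$, etc. Here the witness languages $L_1$ (in $\REG_2^Z\setminus\SLT$) and $L_6,L_7,L_8,L_9$ do the separating work, and I would check that each of them lands on the correct side of every relevant boundary. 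Once this case analysis is complete for the cluster of $\SLT$, $\SLT_k$, $\REG_n^Z$, $\RL_n^V$, $\RL_n^P$ families (the part not covered by prior literature), the theorem follows by assembling these facts with the quoted results.
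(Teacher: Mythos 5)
Your plan matches the paper's own treatment: the theorem is not given a separate proof there either, but rests exactly on the witness-language Lemmas~\ref{l-l1}--\ref{l-l9}, the inclusion/incomparability Lemmas~\ref{lem-slt1-reg5}--\ref{lem-sltk-rlvn}, and the cited literature for the previously known edges, assembled as indicated by the edge labels of Figure~\ref{fig-subreg-hier}. So your proposal is correct and takes essentially the same approach.
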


%The 
%%proper inclusions where no reference is given in the figure as well as the 
%incomparabilities are proved in \cite{Tru18-TRsubreg}.
%For proofs and references to proofs of the relations, we refer to \cite{Tru18-TRsubreg}.
%\pagebreak

\subsection{Tree-controlled grammars}

A tree-controlled grammar is a quintuple $G=(N,T,P,S,R)$ where
\begin{itemize}
\item $(N,T,P,S)$ is a context-free grammar with a set $N$ of non-terminal symbols, a set $T$ of terminal symbols,
a set $P$ of context-free non-erasing rules (with the only exception that the rule $S\ra\lambda$ is allowed if $S$ does not
occur on a right-hand side of a rule), and an axiom $S$,
\item $R$ is a regular set over $N\cup T$.
\end{itemize}

The language $L(G)$ generated by a tree-controlled grammar $G=(N,T,P,S,R)$ consists of all such words~$z\in T^*$ 
which have a derivation tree $t$ where $z$ is the word obtained by reading the leaves from left to right
and the words of all levels of $t$ -- besides the last one -- belong to the regular control language $R$.

Let $\cF$ be a subfamily of $\REG$. Then, we denote the family of languages
generated by tree-controlled grammars $G=(N,T,P,S,R)$ with $R\in \cF$ by ${\cTC}(\cF)$.

\begin{example}\label{ex1}
As an example, we consider the tree-controlled grammar
\[G_1=(\{ S\}, \{a\}, \{S\ra SS, S\ra a\}, S, \{S\}^*).\]
Since the terminal symbol $a$ is not allowed to appear before the last level, on all levels before, any
occurrence of $S$ is replaced by $SS$. Finally, any letter $S$ is replaced by $a$. 
Therefore, the levels of an allowed derivation tree consist of the words
$S$, $SS$, $SSSS$, $\dots$, $S^{2^n}$, $a^{2^n}$ for some $n\geq 0$.
Thus, $L(G_1)=\set{a^{2^n}}{n\geq 0}$.
Due to the structure of the control language which is monoidal and can be generated by a grammar 
with one non-terminal symbol and two rules, we further obtain
\[L(G_1)\in\cTC(\MON)\cap\cTC(\RL_1^V)\cap\cTC(\RL_2^P).\]
\end{example}

\begin{example}\label{ex2}
We now consider the tree-controlled grammar
\[G_2=(\{ S,A,B,C\}, \{ a,b,c\}, P, S, \{ S, aAbBcC\})\]
with
\[P=\{ S\ra aAbBcC, A\ra aA, B\ra bB, C\ra cC, A\ra a, B\ra b, C\ra c\}.\]
By the definition of the control language, any derivation in $G_2$ has the form
\[S\Lra aAbBcC\Lra aaAbbBccC \Lra \dots \Lra a^{n-1}Ab^{n-1}Bc^{n-1}C \Lra a^nb^nc^n\]
with $n\geq 2$. Thus, the tree-controlled grammar $G_2$ generates the non-context-free language
\[L(G_2)=\sets{ a^nb^nc^n}{n\geq 2}.\]
Due to the structure of the control language which is finite and can be generated by a grammar 
with one non-terminal symbol and two rules, we further obtain
\[L(G_2)\in\cTC(\FIN)\cap\cTC(\RL_1^V)\cap\cTC(\RL_2^P).\]
\end{example}

%\begin{example}\label{ex2}
%We now consider the tree-controlled grammar
%\[G_2=(\{ S,A,B,C,D,E\}, \{ a,b\}, P, S, \{ S, AB, aAbBa, aCba, Cb\})\]
%with
%\[P=\{ S\ra AB, A\ra aAb, B\ra Ba, A\ra ab, B\ra a, A\ra aCb, C\ra Cb, C\ra b\}.\]
%By the definition of the finite regular set, any derivation in $G_2$ has the form
%\[S\Lra AB\Lra aAbBa \Lra \dots \Lra a^{n-1}Ab^{n-1}Ba^{n-1} \Lra a^nb^na^n\]
%or
%\begin{eqnarray*}
%S &\Lra& AB\Lra aAbBa \Lra \dots \Lra a^{n-1}Ab^{n-1}Ba^{n-1} \Lra a^nCb^na^n \\
%  &\Lra& a^nCb^{n+1}a^n \Lra a^nCb^{n+m-1}a^n\Lra a^nb^{n+m}a^n
%\end{eqnarray*}
%with $n\geq 1$ and $m\geq 1$. Thus
%\[L(G_2)=\{ a^nb^{n+m}a^n \mid n\geq 1, \ m\geq 0\}.\]
%
%We obtain the same language if we take the control set  
%\[R = (\{ a,b,S,C\}^*\{A,B\}\{ a,b,S,A,C\}^*\{B\}\{ a,b,S,C\}^*)^*\cup \{ a,b,S,C\}^*\]
%instead. Due to the given productions and the control set $R$, the non-terminal words of a level of a
%derivation tree can only be from the set $\{S, AB, aAbBa, aCba, Cb\}$ which is exactly the
%control set choosen initially.
%\end{example}

In \cite{Pau79-tc} (see also \cite{DasPau89}), it has been shown that a language $L$ is 
generated by a tree-controlled grammar if and only if it is generated by a 
context-sensitive grammar. 
\begin{theorem}\label{th1}{\rm (\cite{Pau79-tc}, \cite{DasPau89})}
It holds ${\cTC}(\REG)=\CS$. %\hfill $\Box$
\end{theorem}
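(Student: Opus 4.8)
The plan is to prove Theorem~\ref{th1}, namely $\cTC(\REG)=\CS$, by establishing both inclusions. The inclusion $\cTC(\REG)\subseteq\CS$ is the routine direction: one simulates a tree-controlled grammar by a linear bounded automaton (or, equivalently, builds a context-sensitive grammar) that, working level by level, guesses and verifies the parallel derivation step while simultaneously checking membership of each newly produced level in the regular control language~$R$. Since the grammar uses only non-erasing rules, each level is at least as long as the previous one, so the sentential forms stay within linear space of the final terminal word; the regularity of $R$ means the level-check costs only a fixed amount of extra state information that can be interleaved symbol-by-symbol. I would spell this out just enough to make the space bound evident, citing the Kuroda normal form discussion as the target formalism.

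The substantial direction is $\CS\subseteq\cTC(\REG)$. Here I would start from an arbitrary context-sensitive language $L$ and take a grammar $G=(N,T,P,S)$ for it in Kuroda normal form, so all rules are of type $AB\ra CD$, $A\ra BC$, $A\ra B$, or $A\ra a$ (plus possibly $S\ra\lambda$). The idea is to build a context-free core grammar whose derivation tree, level by level, simulates one rule application of $G$ at a time, using the regular control language to enforce that exactly one position of the current sentential form is "active" and rewritten according to a $G$-rule while all other positions are copied verbatim to the next level. Copying a symbol $X$ down one level is done by a context-free rule $X\ra X$; a context-sensitive step $AB\ra CD$ is the delicate case, since context-free productions act on a single symbol — I would handle it by a two-level gadget that first marks the pair (using primed or bracketed auxiliary symbols introduced as children) and then resolves it, with the regular control language guaranteeing that the marking and resolution are consistent and that at most one rewrite happens per level. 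At the end, a level consisting entirely of terminals is read off as the generated word. I would then argue $L(G')=L(G)=L$ and that the control language used is regular (indeed it is of the form "all but one symbol copied, bounded local pattern at the active site", which is a finite union of regular pieces).

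The main obstacle I expect is precisely the faithful single-step simulation of the length-increasing rules $AB\ra CD$ within a context-free derivation tree: context-free rules cannot see two adjacent symbols at once, so the simulation must be spread over consecutive levels with auxiliary "messenger" symbols, and one must check carefully that the regular control language can (i) force synchronization so that a left-marker and a right-marker on adjacent symbols always pair up correctly, (ii) prevent two independent rewrites from happening on the same level (which would break the one-rule-at-a-time simulation), and (iii) not accidentally admit spurious derivations. A clean way to keep this manageable is to let each simulated $G$-step occupy a bounded number of tree levels and encode the "phase" into the auxiliary symbols, so that the legality of a level is a purely local (bounded-window) condition on the word — hence regular. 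Once that machinery is set up, verifying $L(G')=L$ is a straightforward induction on the length of derivations in each direction, which I would state but not belabor, and the theorem follows; I would also remark that this is the result of P{\u{a}}un~\cite{Pau79-tc} and refer to \cite{DasPau89} for the full details.
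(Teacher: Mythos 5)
First, note that the paper itself gives no proof of Theorem~\ref{th1}: it is quoted from \cite{Pau79-tc} and \cite{DasPau89}. The closest in-paper argument is the construction in the proof of Theorem~\ref{thm-slt2-cs}, which establishes the hard inclusion $\CS\subseteq\cTC(\SLT_2)\subseteq\cTC(\REG)$ by simulating a Kuroda-normal-form grammar with a context-free core, markers $A_{p,1},B_{p,2}$ for the rules $AB\ra CD$, and the control language $(N_{\mathrm{cf}}\cup N_{12})^*$. Your plan for $\CS\subseteq\cTC(\REG)$ is in the same spirit (Kuroda normal form, copy rules, a marker gadget for $AB\ra CD$ checked by the control), and your sketch of $\cTC(\REG)\subseteq\CS$ via a linear-bounded simulation is the standard argument and fine, modulo the small slip that it is the sentential forms, not the level words, that are non-decreasing in length (terminals drop out of later levels).

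The genuine gap is your treatment of terminal symbols. You propose to copy every symbol $X$ verbatim with a rule $X\ra X$, but such a rule is only legal when $X$ is a non-terminal; a terminal produced by $A\ra a$ becomes a leaf and simply vanishes from all subsequent level words. As a consequence, two non-terminals that are separated by terminals in the sentential form become adjacent in the level word, so your control language could be fooled into licensing a marker pair $A_{p,1}B_{p,2}$ (and hence a simulation of $AB\ra CD$) across a gap, admitting spurious derivations -- exactly the pitfall the paper points out and repairs by introducing placeholder non-terminals $\hat{a}$ that are rewritten to terminals only in the very last parallel step, with a control language containing no terminal symbols. Your proposal mentions the danger of spurious derivations only generically and offers no mechanism for it; to close the gap you must either adopt such placeholders or explicitly choose $R$ free of terminal symbols so that all rules $A\ra a$ fire only in the final step (which also conflicts with your ``exactly one rewrite per level'' discipline, since that last step rewrites every position at once -- a restriction that is in any case unnecessary: the paper allows arbitrarily many independent rewrites per level and uses delay rules $A\ra A$ only to avoid forced synchronization).
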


In subsequent papers, tree-controlled grammars have been investigated where the control language
belongs to some subfamily of the class $\REG$ 
(\cite{DasStiTru-dcfs08-tcs,DasStiTru-afl08-ijfcs,DasTru-dcfs08,DasTru-afl08,Sti08-collMitrana,TurDasManSel12,Vasz12}).
In this paper, we continue this research with further subregular language families.

From the definition follows that the subset relation is preserved under the use of tree-controlled
grammars: if we allow more, we do not obtain less.

\begin{lemma}\label{l-tc-gramm-monoton}
For any two language classes $X$ and $Y$ with $X\subseteq Y$,
we have the inclusion
\[\cTC(X)\subseteq\cTC(Y).\]
\end{lemma}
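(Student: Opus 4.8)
The plan is to observe that the statement is an immediate consequence of the definition of $\cTC$, so the proof merely unwinds that definition and checks that nothing changes when we enlarge the admissible class of control languages. First I would take an arbitrary language $L\in\cTC(X)$. By definition of $\cTC(X)$ there is a tree-controlled grammar $G=(N,T,P,S,R)$ with $R\in X$ and $L(G)=L$.

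Next I would use the hypothesis $X\subseteq Y$ to conclude that the control language $R$ also belongs to $Y$. Hence the very same quintuple $G=(N,T,P,S,R)$ is a tree-controlled grammar whose control language lies in $Y$: no rule, non-terminal symbol, or terminal symbol has to be modified, and the collection of admissible derivation trees -- those all of whose levels except the last belong to $R$ -- is literally the same collection. Consequently $L(G)$, read off as the set of leaf words of these trees, is unchanged, so $L=L(G)\in\cTC(Y)$. Since $L$ was arbitrary, this yields $\cTC(X)\subseteq\cTC(Y)$.

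There is no real obstacle here; the only point that requires a moment's care is purely bookkeeping, namely that the definition of $\cTC(\cF)$ quantifies over tree-controlled grammars whose control language is \emph{some} member of $\cF$, so membership in the larger family is witnessed by exactly the same grammar that witnessed membership in the smaller one.
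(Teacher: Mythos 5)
Your proof is correct and takes exactly the route the paper intends: the paper states this lemma without an explicit proof, noting only that it follows directly from the definition, and your argument (the same grammar $G=(N,T,P,S,R)$ with $R\in X\subseteq Y$ witnesses $L(G)\in\cTC(Y)$) is precisely that immediate unwinding of the definition.
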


A summary of the inclusion relations known so far is given in Figure~\ref{fig-tc-hier-old}.
An arrow from an entry~$X$ to an entry~$Y$ depicts the inclusion $X\subseteq Y$;
a solid arrow means proper inclusion; a dashed arrow indicates that it is not known whether the inclusion is proper.
If two families are not connected by a directed path, then they are not necessarily incomparable.
An edge label in this figure refers to the paper where the respective inclusion is proved. 

\begin{figure}[htb]
\centerline{%
\scalebox{0.85}{\begin{tikzpicture}[node distance=18mm and 25mm,on grid=true,
background rectangle/.style=
{%draw=yellow!80,fill=yellow!20,
draw=black!80,
rounded corners=1ex},
show background rectangle]
\node (CF) {$\CF$};
\node (Z1) [above=of CF] {$\EOL\stackrel{\mbox{\footnotesize{\cite{DasTru-afl08}}}}{=}\cTC(\MON_1)\stackrel{\mbox{\footnotesize{\cite{DasStiTru-dcfs08-tcs}}}}{=}\cTC(\REG_1^Z)$};
\node (d2) [above right=of Z1] {};
\node (COMB) [right=of d2] {$\cTC(\COMB)$};
%\node (P1) [below right=of COMB] {$\cTC(\RL_1^P)$};
\node (FIN) [above=of Z1] {$\cTC(\FIN)\stackrel{\mbox{\footnotesize{\cite{DasTru-afl08}}}}{=}\MAT_{\fin}$};
\node (NIL) [above left=of FIN] {$\cTC(\NIL)$};
\node (DEF) [above=of NIL] {$\cTC(\DEF)$};
%\node (SLT1) [above left=of COMB] {$\cTC(\SLT_1)$};
\node (Z2) [above=of COMB] {$\cTC(\REG_2^Z)$};
\node (d1) [left=of NIL] {};
\node (MON2) [left=of d1] {$\cTC(\MON_{{}\geq 2})\stackrel{\mbox{\footnotesize{\cite{DasStiTru-dcfs08-tcs}}}}{=}\ETOL$};
\node (COMM) [above=of MON2] {$\cTC(\COMM)\stackrel{\mbox{\footnotesize{\cite{DasTru-afl08}}}}{=}\MAT$};
\node (Z4) [above=of Z2] {$\cTC(\REG_4^Z)$};
%\node (P2) [right=of COMB] {$\cTC(\RL_2^P)$};
%\node (Pn) [above=of P2] {$\vdots$};
\node (CS) [above right=of DEF] 
  {\parbox{9.5cm}{$\CS\stackrel{\mbox{\footnotesize{\cite{Pau79-tc}}}}{=}\cTC(\REG)\stackrel{\mbox{\footnotesize{\cite{DasTru-afl08}}}}{=}\cTC(\CIRC)\stackrel{\mbox{\footnotesize{\cite{DasTru-afl08}}}}{=}\cTC(\SUF)$\\
   $\phantom{\CS}\stackrel{\mbox{\footnotesize{\cite{DasTru-afl08}}}}{=}\cTC(\ORD)\stackrel{\mbox{\footnotesize{\cite{DasTru-afl08}}}}{=}\cTC(\NC)\stackrel{\mbox{\footnotesize{\cite{DasTru-afl08}}}}{=}\cTC(\PS)\stackrel{\mbox{\footnotesize{\cite{DasStiTru-dcfs08-tcs}}}}{=}\cTC(\REG_{{}\geq 5}^Z)$%\\
   %$\phantom{\CS}=\cTC(\UF)=\cTC(\SLT_{{}\geq 2})=\cTC(\SLT)=\cTC(\RL_1^V)$
  }};
\draw[hier] (CF) %[bend right=29] 
  to node[edgeLabel,pos=.4]{\footnotesize{\cite{RS-Lsystems}}} 
  (Z1);
\draw[hier] (Z1) %[bend right=29] 
  to node[edgeLabel]{\footnotesize{\cite{DasStiTru-afl08-ijfcs}}} 
  (COMB);
%\draw[hiero] (P1) %[bend right=29] 
%  to %node[edgeLabel]{\footnotesize{\cite{HolTru15-ncma}}} 
%  (FIN);
%\draw[hiero] (P1) %[bend right=29] 
%  to %node[edgeLabel]{\footnotesize{\cite{HolTru15-ncma}}} 
%  (P2);
%\draw[hiero] (P2) %[bend right=29] 
%  to %node[edgeLabel]{\footnotesize{\cite{HolTru15-ncma}}} 
%  (Pn);
\draw[hier] (FIN) %[bend right=29] 
  to node[edgeLabel]{\footnotesize{\cite{DasTru-afl08}}} 
  (NIL);
\draw[hier] (Z1) [bend left=29] 
  to node[edgeLabel,pos=.3]{\footnotesize{\cite{DasTru-afl08}}} 
  (NIL);
\draw[hiero] (NIL) %[bend right=29] 
  to %node[edgeLabel]{\footnotesize{\cite{HolTru15-ncma}}} 
  (DEF);
\draw[hiero] (COMB) [bend left=15] 
  to %node[edgeLabel]{\footnotesize{\cite{HolTru15-ncma}}} 
  (DEF);
%\draw[hiero] (COMB) %[bend right=29] 
%  to %node[edgeLabel]{\footnotesize{\cite{HolTru15-ncma}}} 
%  (SLT1);
\draw[hiero] (COMB) %[bend right=29] 
  to %node[edgeLabel]{\footnotesize{\cite{DasStiTru-dcfs08-tcs}}} 
  (Z2);
\draw[hier] (FIN) %[bend right=29] 
  to node[edgeLabel]{\footnotesize{\cite{DasPau89}}} 
  (MON2);
\draw[hier] (Z1) [bend left=15] 
  to node[edgeLabel]{\footnotesize{\cite{RS-Lsystems}}} 
  (MON2);
\draw[hier] (MON2) %[bend right=29] 
  to node[edgeLabel,pos=.4]{\footnotesize{\cite{DasPau89}}} 
  (COMM);
\draw[hiero] (Z2) %[bend right=29] 
  to %node[edgeLabel]{\footnotesize{\cite{HolTru15-ncma}}} 
  (Z4);
\draw[hier] (MON2) %[bend right=29] 
  to node[edgeLabel,pos=.6]{\footnotesize{\cite{DasStiTru-dcfs08-tcs}}} 
  (Z4);
\draw[hier] (COMM) %[bend right=29] 
  to node[edgeLabel,pos=.4]{\footnotesize{\cite{DasPau89}}} 
  (CS);
\draw[hiero] (DEF) %[bend right=29] 
  to %node[edgeLabel]{\footnotesize{\cite{HolTru15-ncma}}} 
  (CS);
%\draw[hiero] (SLT1) %[bend right=29] 
%  to %node[edgeLabel]{\footnotesize{\cite{HolTru15-ncma}}} 
%  (CS);
\draw[hiero] (Z4) %[bend right=29] 
  to %node[edgeLabel]{\footnotesize{\cite{HolTru15-ncma}}} 
  (CS);
%\draw[hiero] (Pn) [bend right=25] 
%  to %node[edgeLabel]{\footnotesize{\cite{HolTru15-ncma}}} 
%  (CS);
%\node (REG) {$\REG$};
%\draw[hier] (UF) [bend right=29] to node[edgeLabel]{\footnotesize{\cite{HolTru15-ncma}}} 
%  (REG);
\end{tikzpicture}}}
\caption{Hierarchy of subregularly tree-controlled language families}\label{fig-tc-hier-old}
\end{figure}
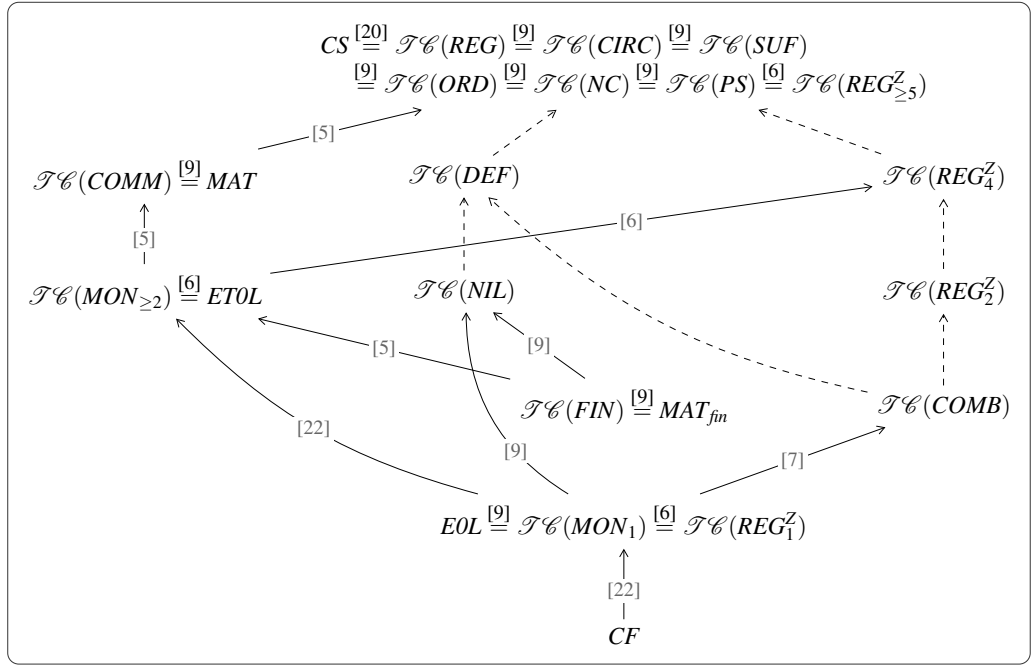
%\pagebreak

\section{Results}

%We now state the relations between the language families.

We insert the classes $\cTC(SLT_k)$ for $k\geq 1$, $\cTC(\SLT)$, $\cTC(\RL_n^V)$ for $n\geq 1$, 
and $\cTC(\RL_n^P)$ for $n\geq 1$ into the existing hierachy (see Figure~\ref{fig-tc-hier-old}).

The inclusions follow from the inclusion relations of the respective families of the control languages
(see Figure~\ref{fig-subreg-hier} and Lemma~\ref{l-tc-gramm-monoton}).

In most cases, we obtain that any context-sensitive language can be generated by a tree-controlled grammar
where the control language is taken from that family.

\begin{theorem}\label{thm-slt2-cs}
We have $\cTC(\SLT_k)=\CS$ for $k\geq 2$ and $\cTC(\SLT)=\CS$.
\end{theorem}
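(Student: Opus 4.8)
The plan is to prove the two easy inclusions $\cTC(\SLT_k)\subseteq\CS$ and $\cTC(\SLT)\subseteq\CS$ together with the single hard inclusion $\CS\subseteq\cTC(\SLT_2)$, and then to assemble everything from the known chain $\SLT_2\subseteq\SLT_k\subseteq\SLT\subseteq\REG$ for $k\geq 2$ (Theorem~\ref{th-subreg-hier}), the monotonicity of the operator $\cTC$ (Lemma~\ref{l-tc-gramm-monoton}), and P\u{a}un's theorem $\cTC(\REG)=\CS$ (Theorem~\ref{th1}). These give $\CS\subseteq\cTC(\SLT_2)\subseteq\cTC(\SLT_k)\subseteq\cTC(\SLT)\subseteq\cTC(\REG)=\CS$, so every family in this chain equals $\CS$, which is exactly the claim. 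Thus everything reduces to showing $\CS\subseteq\cTC(\SLT_2)$.

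For that inclusion, let $L\in\CS$ and fix a context-sensitive grammar $G=(N,T,P,S)$ for $L$ in Kuroda normal form, so every rule has one of the forms $AB\ra CD$, $A\ra BC$, $A\ra B$, $A\ra a$ (plus possibly $S\ra\lambda$). I would construct a tree-controlled grammar $G'=(N',T,P',S',R)$ that simulates $G$ level by level. Its core grammar uses a fresh start symbol $S'$ with rule $S'\ra S$ (and $S'\ra\lambda$ if $S\ra\lambda\in P$), a copy rule $X\ra X$ for every $X\in N\cup T$, the rules $A\ra BC$, $A\ra B$, $A\ra a$ of $P$ taken verbatim, and, for every rule $r\colon AB\ra CD$ of $P$, two fresh nonterminals $\langle r,1\rangle,\langle r,2\rangle$ together with the four rules $A\ra\langle r,1\rangle$, $B\ra\langle r,2\rangle$, $\langle r,1\rangle\ra C$, $\langle r,2\rangle\ra D$; crucially, these are the \emph{only} rules that rewrite $\langle r,1\rangle$ and $\langle r,2\rangle$. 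A context-sensitive step $\dots AB\dots\Rightarrow\dots CD\dots$ is simulated on two successive levels: first $A$ and the adjacent $B$ become the pair $\langle r,1\rangle\langle r,2\rangle$, then that pair becomes $CD$; all other symbols of the level are copied (or independently rewritten by a context-free-type rule).

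The control language $R$ over $N'\cup T$ is the strictly locally $2$-testable language $\slt{B}{I}{E}{F}$ whose sets forbid precisely the wrong neighbourhoods of the markers: a factor $\langle r,1\rangle x$ is admitted only for $x=\langle r,2\rangle$, a factor $x\langle r,2\rangle$ only for $x=\langle r,1\rangle$, no word may begin with a symbol $\langle r,2\rangle$ or end with a symbol $\langle r,1\rangle$ or contain $S'$, and all factors built from symbols of $N\cup T$ are admitted; the finite set $F$ collects the admissible short levels $\lambda$, $S'$, and the single symbols of $N\cup T$. Since these are purely length-$2$ window conditions, $R\in\SLT_2$. It then remains to prove $L(G')=L$. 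For $L\subseteq L(G')$, a $G$-derivation is turned into a derivation tree of $G'$ by realizing each rule of type $A\ra BC$, $A\ra B$, $A\ra a$ on one level and each rule $AB\ra CD$ on two levels as above; every non-last level produced is a word over $N\cup T$ or a word whose markers form isolated matched pairs, hence lies in $R$. For $L(G')\subseteq L$, I argue by induction along the levels $v_0=S',v_1,\dots,v_m=z$ of a valid derivation tree: membership $v_j\in R$ forces markers in $v_j$ to occur only as adjacent pairs $\langle r,1\rangle\langle r,2\rangle$; because $\langle r,1\rangle$ is produced solely by $A\ra\langle r,1\rangle$ and $\langle r,2\rangle$ solely by $B\ra\langle r,2\rangle$ (both with one-letter right-hand sides), the presence of such a pair in $v_{j+1}$ forces that on level $v_j$ an occurrence of $A$ stood immediately left of an occurrence of $B$, both rewritten by $r$'s starting rules, and that the pair is resolved to $CD$ on level $v_{j+2}$. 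Letting $\phi$ replace every $\langle r,1\rangle$ by $A$ and every $\langle r,2\rangle$ by $B$, one checks $\phi(v_j)\Rightarrow^*\phi(v_{j+1})$ in $G$ for every $j$ — a "start" step is invisible under $\phi$, a "resolve" step realizes $r$, and the remaining steps are parallel applications of context-free-type rules of $P$, all non-erasing and length-non-decreasing and hence serializable into a legal $G$-derivation — whence $S=\phi(v_1)\Rightarrow^*\phi(v_m)=z$, i.e.\ $z\in L$.

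The main obstacle is the design and verification of the control language: an $\SLT_2$ language cannot count and cannot enforce any global condition such as "at most one rule is applied per level", so the simulation has to be arranged so that \emph{every} property the control must check is a property of length-$2$ factors. The paired, rule-indexed markers with forced single-letter continuation rules achieve exactly this: adjacency of the two halves of a context-sensitive rule's left-hand side becomes the only thing $R$ tests, while the uniqueness of the rules rewriting the markers (rather than the control) guarantees that a started rule is completed correctly and that no marker pair can persist across more than two levels. Once this is set up, the remaining bookkeeping in the $L(G')\subseteq L$ direction is routine.
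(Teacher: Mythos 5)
Your overall strategy is the same as the paper's (reduce everything to $\CS\subseteq\cTC(\SLT_2)$ via monotonicity and $\cTC(\REG)=\CS$, and simulate a Kuroda-normal-form grammar with rule-indexed marker pairs whose adjacency is the only thing the $\SLT_2$ control checks), but your construction has a genuine gap in how it treats terminal symbols. The ``copy rules'' $X\ra X$ for $X\in T$ are not legal: the core of a tree-controlled grammar is context-free, so only non-terminal symbols are rewritten, and a terminal produced by a verbatim rule $A\ra a$ before the last level is a leaf and therefore disappears from the words of all later levels. Since the control only inspects level words, two non-terminals that are separated by such a terminal in the sentential form become adjacent on the next level, and your control then licenses a simulation of $AB\ra CD$ on symbols that are not neighbours in $G$. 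Concretely, take the Kuroda rules $S\ra AE$, $E\ra XB$, $X\ra x$, $A\ra a$, $B\ra b$, $C\ra c$, $D\ra d$ and $r:AB\ra CD$; then $L(G)=\{axb\}$, but in your $G'$ the levels $S'$, $S$, $AE$, $AXB$, $AxB$, $\langle r,1\rangle\langle r,2\rangle$, $CD$, $cd$ are all admitted by your control (on the level after $AxB$ the leaf $x$ is gone, so the two markers are adjacent), and reading the leaves gives $cxd\in L(G')\setminus L(G)$. The same phenomenon breaks your inductive argument at the step ``the pair in $v_{j+1}$ forces $A$ immediately left of $B$'' (true only in the level word, not in the sentential form) and also the identification $\phi(v_m)=z$, since $z$ is read off the leaves, not off the last level.

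The paper closes exactly this hole by making sure that no terminal symbol ever appears before the last level: every terminating rule $A\ra a$ is replaced by $A\ra\hat a$ for a fresh placeholder non-terminal $\hat a$, delay rules $A\ra A$ and $\hat a\ra\hat a$ keep all symbols alive on every level (so the level word coincides with the sentential form up to hatting), and only at the very end the rules $\hat a\ra a$ produce the terminal word. With that modification, your marker rules and your length-$2$ window conditions (equivalently the paper's control $(N_{\mathrm{cf}}\cup N_{12})^*=\slt{B}{I}{E}{F}\in\SLT_2$) do yield $L(G_{\mathrm{tc}})=L$, and the rest of your argument goes through as in the paper.
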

\begin{proof}
Let $L$ be a context-sensitive language. Then, there is a context-sensitive grammar $G=(N,T,P,S)$ with $L(G)=L$
which is in Kuroda normal form, where the rule set $P$ can be divided into two sets~$P_1$ and~$P_2$ such that all rules of $P_1$
are of the form $A\ra BC$ or $A\ra B$ or $A\ra a$ with $A,B,C,D\in N$ and $a\in T$
and all rules of $P_2$ are of the form $AB\ra CD$ with $A,B,C,D\in N$.

We will construct a tree-controlled grammar $G_{\mathrm{tc}}$ which simulates the grammar $G$. Since $G_{\mathrm{tc}}$
has only context-free rules, the non-context-free rules of $G$ have to be substituted by context-free rules and some 
control such that the parts of a non-context-free rule which are independent from the view of the core grammar 
of $G_{\mathrm{tc}}$ remain connected.

We label the non-context-free rules and associate the non-terminal symbols of their left-hand sides with new non-terminal
symbols which are marked with the rule label and the position (first or second letter). The context-free rules can be
freely applied also in the tree-controlled grammar. A non-context-free rule $p:AB\ra CD$ will be simulated by 
context-free rules 
\[A\ra A_{p,1},\ B\ra B_{p,2},\ A_{p,1}\ra C, \mbox{ and } B_{p,2}\ra D.\]
The control language ensures
that the rules which belong together (here $A\ra A_{p,1}$ and $B\ra B_{p,2}$) are applied together (at the same time and 
next to each other). If a terminal symbol is produced in a sentential form of the grammar $G$, then it remains there
until the whole terminal word is produced. In the tree-controlled grammar $G_{\mathrm{tc}}$, one has to keep track
of terminal symbols because they `disappear' (once produced, they are not present in the next level anymore) and then
two non-terminal symbols appear next to each other, although they are not neighbours in the sentential form. So, the 
tree-controlled grammar should produce placeholders for terminal symbols and replace them by the actual terminal symbols
only in the very end. In a tree-controlled grammar, from one level to the next, all non-terminal symbols are
replaced. This can be seen as some kind of shortcut where production rules which are independent from each other are 
applied in parallel.

We construct such a tree-controlled grammar $G_{\mathrm{tc}}=(N_{\mathrm{tc}}, T, P_{\mathrm{tc}}, S, R_{\mathrm{tc}})$.
The terminal alphabet and start symbol are the same as in the grammar $G$. We now give the rules; the non-terminal
symbols will be collected later from the rules. At the end, we will give the control language $R_{\mathrm{tc}}$.

In order to simulate the context-free rules directly, we take all non-terminating rules of them from $G$ as they are:
\[ P_{\mathrm{cf}}=P\cap(\set{A\ra BC}{A,B,C\in N}\cup\set{A\ra B}{A,B\in N}).\]

Instead of the terminating rules, we take rules with a placeholder (for each terminal symbol $a$, we introduce a unique
non-terminal symbol $\hat{a}$), but finally, those placeholders have to be terminated:
\[ P_{\mathrm{t}}=\set{A\ra \hat{a}}{A\in N, a\in T, A\ra a\in P}\cup\set{\hat{a}\ra a}{a\in T}.\]

We give also rules which can delay the derivation such that not everything needs to be replaced in parallel:
\[ P_{\mathrm{d}}=\set{A\ra A}{A\in N}\cup\set{\hat{a}\ra \hat{a}}{a\in T}.\]

For simulating the non-context-free rules, first rules are applied which mark the position of the intended application
such that the control language has the chance to check whether the plan is alright (if it is not, then the derivation will
block). In the next step, the markers will be replaced by their actual target non-terminal symbols:
\[ P_{\mathrm{cs}}=\bigcup_{p:AB\ra CD\in P}\{A\ra A_{p,1},\ B\ra B_{p,2},\ A_{p,1}\ra C,\ B_{p,2}\ra D \}.\]

Other rules are not needed, hence,
\[ P_{\mathrm{tc}}= P_{\mathrm{cf}}\cup P_{\mathrm{t}}\cup P_{\mathrm{d}}\cup P_{\mathrm{cs}}.\]

The set $N_{\mathrm{tc}}$ of non-terminal symbols results as follows:
%\begin{align*}
%N_{\mathrm{cf}} &= N \cup \set{\hat{a}}{a\in T},\\
%N_1 &= \set{A_{p,1}}{p:AB\ra CD\in P},\\
%N_2 &= \set{B_{p,2}}{p:AB\ra CD\in P},\\
%N_{12} &= \set{A_{p,1}B_{p,2}}{p:AB\ra CD\in P},\\
%N_{\mathrm{tc}} &= N_{\mathrm{cf}}\cup N_1\cup N_2.
%\end{align*}
\begin{gather*}
N_{\mathrm{cf}} = N \cup \set{\hat{a}}{a\in T},\ N_1 = \set{A_{p,1}}{p:AB\ra CD\in P},\ N_2 = \set{B_{p,2}}{p:AB\ra CD\in P},\\
N_{12} = \set{A_{p,1}B_{p,2}}{p:AB\ra CD\in P},\ N_{\mathrm{tc}} = N_{\mathrm{cf}}\cup N_1\cup N_2.
\end{gather*}

A derivation can go wrong only if the simulation of a non-context-free rule is not properly planned. Hence, as control
language, we take
\[ R_{\mathrm{tc}}= (N_{\mathrm{cf}}\cup N_{12})^* .\]

Since the context-free rules of the grammar $G$ can be applied independently from each other and do not have to
be applied at a certain time (thanks to the rules from the subset $P_{\mathrm{d}}$) and the correct simulation 
of the non-context-free rules is ensured by the control language $R_{\mathrm{tc}}$, it is not hard to see
that the generated languages $L(G)$ and $L(G_{\mathrm{tc}})$ coincide.

The control language $R_{\mathrm{tc}}$ is strictly locally 2-testable as can be seen from the following
representation: Let
\begin{align*}
B &= N_{\mathrm{cf}}^2 \cup N_{\mathrm{cf}}N_1 \cup N_{12},&
I &= N_{\mathrm{cf}}^2 \cup N_{\mathrm{cf}}N_1 \cup N_{12} \cup N_2N_{\mathrm{cf}} \cup N_2N_1,\\
E &= N_{\mathrm{cf}}^2 \cup N_{12} \cup N_2N_{\mathrm{cf}},&
F &= N_{\mathrm{cf}}\cup\{\lambda\}.
\end{align*}
%\[B = N_{\mathrm{cf}}^2 \cup N_{\mathrm{cf}}N_1 \cup N_{12},\ 
%I = N_{\mathrm{cf}}^2 \cup N_{\mathrm{cf}}N_1 \cup N_{12} \cup N_2N_{\mathrm{cf}} \cup N_2N_1,\ 
%E = N_{\mathrm{cf}}^2 \cup N_{12} \cup N_2N_{\mathrm{cf}},\ 
%F = N_{\mathrm{cf}}\cup\{\lambda\}.\]
Then $R_{\mathrm{tc}}= \slt{B}{I}{E}{F}$.

Altogether, we obtain $\CS\subseteq\cTC(\SLT_2)\subseteq\cTC(\SLT_k)\subseteq\cTC(\SLT)\subseteq\CS$
for $k\geq 3$.
Thus, it holds~$\cTC(\SLT_k)=\CS$ for $k\geq 2$ and $\cTC(\SLT)=\CS$.
\end{proof}

\begin{theorem}\label{thm-rlv-cs}
We have $\cTC(\RL_n^V)=\CS$ for $n\geq 1$.
\end{theorem}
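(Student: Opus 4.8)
The plan is to show $\CS\subseteq\cTC(\RL_n^V)$ for every $n\geq 1$; the reverse inclusion $\cTC(\RL_n^V)\subseteq\cTC(\REG)=\CS$ follows from Lemma~\ref{l-tc-gramm-monoton} and Theorem~\ref{th1}. Since $\RL_1^V\subseteq\RL_n^V$ for all $n\geq 1$ (Figure~\ref{fig-subreg-hier}), it suffices to treat $n=1$, i.e.\ to simulate an arbitrary context-sensitive language by a tree-controlled grammar whose control language is generated by a right-linear grammar with a single non-terminal symbol.

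First I would reuse essentially the construction from the proof of Theorem~\ref{thm-slt2-cs}: start from a context-sensitive grammar $G=(N,T,P,S)$ in Kuroda normal form, split $P$ into context-free rules and non-context-free rules $p:AB\ra CD$, and build $G_{\mathrm{tc}}$ with the rule sets $P_{\mathrm{cf}}$, $P_{\mathrm{t}}$ (placeholders $\hat a$ for terminals), $P_{\mathrm{d}}$ (the ``delay'' rules $A\ra A$, $\hat a\ra\hat a$), and $P_{\mathrm{cs}}$ (the four rules $A\ra A_{p,1}$, $B\ra B_{p,2}$, $A_{p,1}\ra C$, $B_{p,2}\ra D$ for each $p$). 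The only thing that must change is the control language: instead of the strictly locally $2$-testable language $\slt{B}{I}{E}{F}$, I need a language in $\RL_1^V$ that still enforces ``every marker $A_{p,1}$ is immediately followed by the matching $B_{p,2}$, and no stray markers occur''. The natural candidate is $R_{\mathrm{tc}}=(N_{\mathrm{cf}}\cup N_{12})^*$, exactly as before, where $N_{\mathrm{cf}}=N\cup\set{\hat a}{a\in T}$ and $N_{12}=\set{A_{p,1}B_{p,2}}{p:AB\ra CD\in P}$. This is a monoidal-style language $W^*$ over the finite word set $W=N_{\mathrm{cf}}\cup N_{12}$, hence generated by the right-linear grammar with the single non-terminal $S'$ and rules $S'\ra wS'$ for each $w\in W$ together with $S'\ra\lambda$; so $R_{\mathrm{tc}}\in\RL_1^V$.

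The key steps, in order, are: (i) state the construction of $G_{\mathrm{tc}}$ verbatim as in Theorem~\ref{thm-slt2-cs}, only redefining the control language as $R_{\mathrm{tc}}=(N_{\mathrm{cf}}\cup N_{12})^*$; (ii) observe that $R_{\mathrm{tc}}$ has the form $W^*$ for a finite $W\subseteq(N_{\mathrm{tc}}\cup T)^*$, exhibit the one-non-terminal right-linear grammar generating it, and conclude $R_{\mathrm{tc}}\in\RL_1^V$; (iii) argue $L(G_{\mathrm{tc}})=L(G)$ exactly as in the earlier proof, since the control language here is the \emph{same} set $(N_{\mathrm{cf}}\cup N_{12})^*$ used there (merely described differently), so correctness transfers unchanged; (iv) chain the inclusions $\CS\subseteq\cTC(\RL_1^V)\subseteq\cTC(\RL_n^V)\subseteq\cTC(\REG)=\CS$ to obtain equality for all $n\geq 1$.

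I do not expect a genuine obstacle, because the heavy lifting — showing that this particular control language correctly forces the markers of a non-context-free rule to stay adjacent while leaving context-free rules free to fire at any time — was already done for Theorem~\ref{thm-slt2-cs}. The only point needing care is presentational: making clear that $(N_{\mathrm{cf}}\cup N_{12})^*$ is simultaneously strictly locally $2$-testable \emph{and} in $\RL_1^V$ (indeed it even lies in $\MON_{\geq 2}$-like territory in spirit, being a star of a finite language), so that Theorems~\ref{thm-slt2-cs} and~\ref{thm-rlv-cs} are really two readings of one construction. If one wants Theorem~\ref{thm-rlv-cs} to be logically self-contained I would restate the correctness argument in one sentence rather than merely citing it; otherwise a pointer to the proof of Theorem~\ref{thm-slt2-cs} suffices.
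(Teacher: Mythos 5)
Your proposal is correct and follows essentially the same route as the paper: it reuses the construction and control language $R_{\mathrm{tc}}=(N_{\mathrm{cf}}\cup N_{12})^*$ from Theorem~\ref{thm-slt2-cs}, exhibits a one-non-terminal right-linear grammar for it, and chains $\CS\subseteq\cTC(\RL_1^V)\subseteq\cTC(\RL_n^V)\subseteq\CS$. The only cosmetic difference is that your grammar uses $S'\ra\lambda$ while the paper terminates with rules $S'\ra x$ for $x\in N_{\mathrm{cf}}\cup N_{12}$; both variants serve equally well as control languages here.
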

\begin{proof}
The control language $R_{\mathrm{tc}}= (N_{\mathrm{cf}}\cup N_{12})^*$ from the tree-controlled grammar $G_{\mathrm{tc}}$
in the proof of Theorem~\ref{thm-slt2-cs} can be generated by a right-linear 
grammar $G'=(\{S'\},N_{\mathrm{tc}},P',S')$ where 
\[P'=\set{S'\ra xS'}{x\in N_{\mathrm{cf}}\cup N_{12}}\cup\set{S'\ra x}{x\in N_{\mathrm{cf}}\cup N_{12}}.\]
Hence,
$\CS\subseteq\cTC(\RL_1^V)\subseteq\cTC(\RL_n^V)\subseteq\CS$
for $n\geq 2$.
Thus, we conclude $\cTC(\RL_n^V)=\CS$ for $n\geq 1$.
\end{proof}

From the proof of Theorem~\ref{thm-slt2-cs}, we conclude also the following statement.

\begin{theorem}\label{thm-uf-cs}
We have $\cTC(\UF)=\CS$.
\end{theorem}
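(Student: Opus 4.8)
The plan is to re-examine the control language $R_{\mathrm{tc}}= (N_{\mathrm{cf}}\cup N_{12})^*$ constructed in the proof of Theorem~\ref{thm-slt2-cs} and observe that it is in fact union-free. Recall that a language is union-free if it can be described by a regular expression built only from concatenation and Kleene star (no union). Now, $N_{\mathrm{cf}}\cup N_{12}$ is a finite alphabet, say $\{x_1,x_2,\ldots,x_m\}$, and $R_{\mathrm{tc}}$ is exactly the set of all words over this alphabet, i.e.\ $(N_{\mathrm{cf}}\cup N_{12})^*$. The key point is that for any finite alphabet $\Sigma=\{x_1,\ldots,x_m\}$ the language $\Sigma^*$ is union-free: indeed $\Sigma^* = (x_1^* x_2^* \cdots x_m^*)^*$, a regular expression using only product and star. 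Hence $R_{\mathrm{tc}}\in\UF$.

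The main steps are then as follows. First I would state that a context-sensitive language $L$ is given and, by Theorem~\ref{thm-slt2-cs} (more precisely its proof), there is a tree-controlled grammar $G_{\mathrm{tc}}$ with $L(G_{\mathrm{tc}})=L$ whose control language is $R_{\mathrm{tc}}=(N_{\mathrm{cf}}\cup N_{12})^*$. Second, I would verify that $R_{\mathrm{tc}}$ is union-free via the regular expression $(x_1^* x_2^* \cdots x_m^*)^*$ where $\{x_1,\ldots,x_m\}=N_{\mathrm{cf}}\cup N_{12}$; one checks that the described language is contained in $(N_{\mathrm{cf}}\cup N_{12})^*$ trivially, and conversely every word over that alphabet is obviously of the form $x_{i_1}x_{i_2}\cdots x_{i_\ell}$, which lies in the described language (take $\ell$ iterations of the outer star, each picking a single letter from the appropriate $x_{i_j}^*$). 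Third, I would conclude $R_{\mathrm{tc}}\in\UF$, hence $L\in\cTC(\UF)$, giving $\CS\subseteq\cTC(\UF)$; combined with $\cTC(\UF)\subseteq\cTC(\REG)=\CS$ from Lemma~\ref{l-tc-gramm-monoton} and Theorem~\ref{th1}, this yields $\cTC(\UF)=\CS$.

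There is essentially no obstacle here: the whole content is the observation that the control language already built for the strictly-locally-testable case happens simultaneously to be union-free, so the same tree-controlled grammar witnesses the inclusion $\CS\subseteq\cTC(\UF)$ without any new construction. The only thing requiring a line of care is writing down the union-free regular expression for $\Sigma^*$ and noting it is equivalent to $\Sigma^*$; this is the routine calculation I would not grind through beyond indicating the expression $(x_1^*x_2^*\cdots x_m^*)^*$. If one prefers an even shorter route, one may note that any $\Sigma^*$ with $|\Sigma|\ge 1$ is accepted by a one-state deterministic finite automaton and invoke a known characterisation, but the explicit regular expression is the cleanest self-contained argument and is what I would present.
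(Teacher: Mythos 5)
Your proposal is correct and follows essentially the same route as the paper: the paper likewise observes that the Kleene closure of a finite language $\{w_1,\ldots,w_m\}$ equals $(w_1^*w_2^*\cdots w_m^*)^*$ and is therefore union-free, so the control language $R_{\mathrm{tc}}=(N_{\mathrm{cf}}\cup N_{12})^*$ from the proof of Theorem~\ref{thm-slt2-cs} already witnesses $\CS\subseteq\cTC(\UF)$. One cosmetic remark: $N_{12}$ consists of length-two words $A_{p,1}B_{p,2}$, so $N_{\mathrm{cf}}\cup N_{12}$ is a finite language rather than an alphabet, but the regular-expression identity you use holds verbatim for finite sets of words, so your argument is unaffected.
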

\begin{proof}
Let $L=\{w_1,w_2,\ldots, w_n\}$ be a finite language. Then $L^*=(\{w_1\}^*\{w_2\}^*\cdots\{w_n\}^*)^*$
and is therefore union-free.

The control language $R_{\mathrm{tc}}= (N_{\mathrm{cf}}\cup N_{12})^*$ from the tree-controlled grammar $G_{\mathrm{tc}}$
in the proof of Theorem~\ref{thm-slt2-cs} is the Kleene closure of a finite language and, hence, it is union-free.
\end{proof}

Regarding the classes $\cTC(\RL_n^P)$ for $n\geq 1$, the situation is different since the
number of rules depends on the size of the alphabet (which is not necessarily the case
for the number of non-terminal symbols or the number of states).

If the control language is generated with one rule only, then either the control language is
the empty set (if the right-hand side of the rule contains a non-terminal symbol) or it contains
exactly one terminal word. Since the start symbol of the tree-controlled grammar always forms the 
first level of the derivation tree, it must be contained in the control language (otherwise, the
derivation would be blocked right from the beginning). Therefore, we obtain the following result.

\begin{lemma}\label{l-p1}
Let $G=(N,T,P,S,R)$ a tree-controlled grammar with $R\in\RL_1^P$. Then, the generated language is
\[L(G)=\begin{cases}\set{w}{w\in T^* \mbox{ and } S\ra w\in P}, & \text{if $R=\{S\}$},\\
  \emptyset, & \text{otherwise}.
\end{cases}\]
\end{lemma}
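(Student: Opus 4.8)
The plan is to analyze what a control language with $\Prod_\RL$-complexity one can look like, and then to argue that in a tree-controlled grammar such a control is extremely restrictive. First I would observe that a right-linear grammar with a single production rule has the form $G'=(N',T',\{r\},S')$ where the unique rule $r$ has left-hand side $S'$ (since $S'$ must be rewritable at least once for any word to be generated). There are exactly two cases: either $r$ is of the form $S'\ra w$ with $w\in T'^*$, in which case $L(G')=\{w\}$ is a singleton consisting of one terminal word; or $r$ is of the form $S'\ra wB$ with $B\in N'$, and since there is no rule with left-hand side $B$ (there is only one rule, with left-hand side $S'$, and if $B=S'$ the derivation never terminates), no terminal word is ever derived, so $L(G')=\emptyset$. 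Hence any $R\in\RL_1^P$ is either a single-word language $\{w\}$ or the empty language.

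Next I would plug this dichotomy into the definition of the language generated by a tree-controlled grammar $G=(N,T,P,S,R)$. If $R=\emptyset$, then no derivation tree can have all its non-final levels in $R$ (the first level, consisting of the single symbol $S$, already fails to lie in $R$), so $L(G)=\emptyset$. If $R=\{w\}$ for some word $w\in(N\cup T)^*$, then the first level of any admissible derivation tree — which is always the one-symbol word $S$ — must lie in $R$, forcing $w=S$ and hence $R=\{S\}$. Consequently, unless $R=\{S\}$ exactly, we again get $L(G)=\emptyset$; in particular every case where $R$ is a singleton different from $\{S\}$ collapses to the empty language, which together with the $R=\emptyset$ case covers the ``otherwise'' branch of the claimed formula.

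It remains to handle the case $R=\{S\}$. Here every non-final level of an admissible derivation tree must equal the word $S$. The root level is $S$. If the tree has height at least two, then the level just below the root is obtained by applying a rule $S\ra \alpha\in P$, and this level must again equal $S$ unless it is the final (leaf) level; the only rule producing the word $S$ on the next level would be $S\ra S$, which does not advance toward a terminal word, so in fact for the leaf level to be reached and consist of a terminal word, the tree must have height exactly one: the root $S$ expands in one step via some rule $S\ra w$ with $w\in T^*$, and the leaves read $w$. Conversely, any rule $S\ra w\in P$ with $w\in T^*$ yields such a one-step derivation tree whose only non-final level is $S\in R$. Therefore $L(G)=\set{w}{w\in T^*\mbox{ and }S\ra w\in P}$, as claimed. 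I do not expect a serious obstacle here; the only point requiring a little care is ruling out trees of height greater than one in the $R=\{S\}$ case, which is settled by noting that the penultimate level would have to be $S$ and the single symbol $S$ can only be replaced, in one parallel step, by the right-hand side of a rule $S\ra\gamma$, forcing $\gamma\in T^*$ for the next level to be terminal.
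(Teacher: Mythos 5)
Your proposal is correct and follows essentially the same route as the paper's proof: you first observe that a language in $\RL_1^P$ is either empty or a singleton, then use the fact that the first level of every derivation tree is the word $S$ to force $R=\{S\}$ in the non-empty case, and finally read off that the generated words are exactly the terminal right-hand sides of $S$-rules. The only blemish is your intermediate claim that admissible trees must have height exactly one, which is false if $S\ra S\in P$ (levels $S,S,\dots,S,w$ are then allowed); this is harmless, since your own closing observation -- the penultimate level is the single symbol $S$, so the last level is the right-hand side of some rule $S\ra\gamma$ with $\gamma\in T^*$ -- already yields $L(G)=\set{w}{w\in T^* \mbox{ and } S\ra w\in P}$ without bounding the height, exactly as the paper argues ("every level but the last one is $S$").
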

\begin{proof}
If $R=\{S\}$, then every level but the last one of the derivation tree is $S$ and the last level is
a terminal word which is produced by $S$. On the other hand, all terminal words derived from $S$ belong
to the generated language.

If $R\not=\{S\}$, then $S\notin R$ since $R$ contains at most one word because $R\in\RL_1^P$. Since $S$
is the word of the first level of the derivation tree, there is no derivation possible. Hence, $L(G)$ is empty.
\end{proof}

From this result, the next one immediately follows.

\begin{theorem}\label{thm-p1}
We have $\cTC(\RL_1^P)=\FIN$.
\end{theorem}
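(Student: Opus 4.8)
The plan is to derive $\cTC(\RL_1^P)=\FIN$ directly from Lemma~\ref{l-p1}, which already pins down exactly what languages a tree-controlled grammar with a one-rule control language can generate. The argument has two directions.

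For the inclusion $\cTC(\RL_1^P)\subseteq\FIN$: take any tree-controlled grammar $G=(N,T,P,S,R)$ with $R\in\RL_1^P$. By Lemma~\ref{l-p1}, either $L(G)=\emptyset$ (which is finite) or $R=\{S\}$ and $L(G)=\set{w}{w\in T^*\mbox{ and }S\ra w\in P}$. In the latter case $L(G)$ is a subset of the (finite) set of right-hand sides of rules in $P$, hence finite. So every language in $\cTC(\RL_1^P)$ is finite.

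For the reverse inclusion $\FIN\subseteq\cTC(\RL_1^P)$: let $L=\{w_1,w_2,\ldots,w_m\}\subseteq T^*$ be an arbitrary finite language. I would take a fresh start symbol $S$, put $N=\{S\}$, and let $P=\set{S\ra w_i}{1\leq i\leq m}$, together with the rule $S\ra\lambda$ only if $\lambda\in L$ (which is permitted since $S$ occurs on no right-hand side). For the control language take $R=\{S\}$, which is generated by the right-linear grammar with the single rule $S'\ra S$, so $R\in\RL_1^P$. Then the only possible derivation trees have the single-node level $S$ followed by one application of some rule $S\ra w_i$, so by Lemma~\ref{l-p1} (the case $R=\{S\}$) we get $L(G)=\{w_1,\ldots,w_m\}=L$. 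One small point to note is the empty-language case $L=\emptyset$: this is covered either by the same construction with $P=\emptyset$ (then no terminal word is derivable) or by invoking the ``otherwise'' branch of Lemma~\ref{l-p1} with any $R\neq\{S\}$ in $\RL_1^P$.

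There is essentially no obstacle here — the theorem is an immediate corollary of Lemma~\ref{l-p1}, as the text itself signals (``From this result, the next one immediately follows''). The only thing to be mildly careful about is bookkeeping around the empty word and the empty language, making sure the non-erasing convention of tree-controlled grammars (with the $S\ra\lambda$ exception) is respected; both are handled by the observations above.
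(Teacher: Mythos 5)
Your proposal is correct and follows essentially the same route as the paper: both directions are read off from Lemma~\ref{l-p1}, and your construction $G=(\{S\},T,\set{S\ra w}{w\in L},S,\{S\})$ with control language $\{S\}\in\RL_1^P$ is exactly the paper's witness for $\FIN\subseteq\cTC(\RL_1^P)$. Your extra bookkeeping about $\lambda\in L$ and $L=\emptyset$ is sound and only makes explicit what the paper leaves implicit.
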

\begin{proof}
The inclusion $\cTC(\RL_1^P)\subseteq\FIN$ follows from Lemma~\ref{l-p1}.
The inclusion $\FIN\subseteq\cTC(\RL_1^P)$ can also be seen from Lemma~\ref{l-p1}: Let $L$ be
a finite language over an alphabet $T$. Then, construct a tree-controlled 
grammar $G=(\{S\},T,\set{S\ra w}{w\in L},S,\{S\})$. It holds $L(G)=L$ and $L(G)\in\cTC(\RL_1^P)$.
\end{proof}

If the control language is taken from the family $\cTC(\RL_2^P)$, then already context-sensitive
languages can be generated as the Examples~\ref{ex1} and~\ref{ex2} show.

\begin{theorem}\label{thm-p1-p2}
We have $\cTC(\RL_1^P)\subset\cTC(\RL_2^P)$.
\end{theorem}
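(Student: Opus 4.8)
The plan is to establish the proper inclusion $\cTC(\RL_1^P)\subset\cTC(\RL_2^P)$ in two parts: first the inclusion, then the properness. The inclusion is immediate: every language in $\RL_1^P$ is also in $\RL_2^P$ (a grammar with one rule trivially has at most two rules), so $\RL_1^P\subseteq\RL_2^P$, and then Lemma~\ref{l-tc-gramm-monoton} gives $\cTC(\RL_1^P)\subseteq\cTC(\RL_2^P)$.

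For the properness, I would exhibit a language in $\cTC(\RL_2^P)\setminus\cTC(\RL_1^P)$. By Theorem~\ref{thm-p1} we know $\cTC(\RL_1^P)=\FIN$, so it suffices to produce an \emph{infinite} language generated by a tree-controlled grammar whose control language lies in $\RL_2^P$. Example~\ref{ex1} already supplies exactly this: the grammar $G_1=(\{S\},\{a\},\{S\ra SS,\ S\ra a\},S,\{S\}^*)$ generates the infinite language $L(G_1)=\set{a^{2^n}}{n\geq 0}$, and the control language $\{S\}^*$ is generated by the right-linear grammar with the two rules $S'\ra SS'$ and $S'\ra\lambda$ (or $S'\ra S,\ S'\ra SS'$), so $\{S\}^*\in\RL_2^P$. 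Hence $L(G_1)\in\cTC(\RL_2^P)$. (Example~\ref{ex2} would serve equally well, giving $\set{a^nb^nc^n}{n\geq 2}$.) Since $L(G_1)$ is infinite it is not in $\FIN=\cTC(\RL_1^P)$, which gives the strictness.

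There is really no main obstacle here — the theorem is a corollary of Theorem~\ref{thm-p1} together with the already-worked Example~\ref{ex1}. The only point requiring a moment's care is the claim that $\{S\}^*\in\RL_2^P$, i.e.\ that the one-letter Kleene-closure language can be generated by a right-linear grammar with only two production rules; this is precisely the observation already recorded in Example~\ref{ex1} (``can be generated by a grammar with one non-terminal symbol and two rules''), so it may simply be cited. I would therefore write the proof as: note $\RL_1^P\subseteq\RL_2^P$ and apply Lemma~\ref{l-tc-gramm-monoton} for the inclusion; then invoke Theorem~\ref{thm-p1} to identify $\cTC(\RL_1^P)$ with $\FIN$, and point to Example~\ref{ex1} (or \ref{ex2}) for an infinite language in $\cTC(\RL_2^P)$, concluding that the inclusion is proper.
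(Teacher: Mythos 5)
Your proposal is correct and follows essentially the same route as the paper: the inclusion via $\RL_1^P\subseteq\RL_2^P$ and Lemma~\ref{l-tc-gramm-monoton}, and properness via Theorem~\ref{thm-p1} ($\cTC(\RL_1^P)=\FIN$) together with Example~\ref{ex1} (or~\ref{ex2}) supplying a non-finite language in $\cTC(\RL_2^P)$. Only a cosmetic remark: your parenthetical alternative grammar with rules $S'\ra S$ and $S'\ra SS'$ generates $\{S\}^+$ rather than $\{S\}^*$, but your primary grammar (rules $S'\ra SS'$ and $S'\ra\lambda$) is fine, so the argument stands.
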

\begin{proof}
The inclusion follows from Theorem~\ref{th-subreg-hier} and Lemma~\ref{l-tc-gramm-monoton}.
According to Theorem~\ref{thm-p1}, the family $\cTC(\RL_1^P)$ contains finite languages only. As
shown in the Examples~\ref{ex1} and~\ref{ex2}, the family~$\cTC(\RL_2^P)$ contains non-context-free languages.
\end{proof}

A summary of all the inclusion relations is given in Figure~\ref{fig-tc-hier-new}.
An arrow from an entry~$X$ to an entry~$Y$ depicts the inclusion $X\subseteq Y$;
a solid arrow means proper inclusion; a dashed arrow indicates that it is not known whether the inclusion is proper.
If two families are not connected by a directed path, then they are not necessarily incomparable.
An edge label in this figure refers to the paper or theorem above where the respective inclusion is proved. 

\begin{figure}[htb]
\centerline{%
\scalebox{0.85}{\begin{tikzpicture}[node distance=18mm and 25mm,on grid=true,
background rectangle/.style=
{%draw=yellow!80,fill=yellow!20,
draw=black!80,
rounded corners=1ex},
show background rectangle]
\node (P1) {$\cTC(\RL_1^P)\stackrel{\mbox{\footnotesize{\ref{thm-p1}}}}{=}\FIN$};
\node (CF) [above=of P1] {$\CF$};
\node (Z1) [above=of CF] {$\EOL\stackrel{\mbox{\footnotesize{\cite{DasTru-afl08}}}}{=}\cTC(\MON_1)\stackrel{\mbox{\footnotesize{\cite{DasStiTru-dcfs08-tcs}}}}{=}\cTC(\REG_1^Z)$};
\node (d2) [above right=of Z1] {};
\node (COMB) [right=of d2] {$\cTC(\COMB)$};
\node (FIN) [above=of Z1] {$\cTC(\FIN)\stackrel{\mbox{\footnotesize{\cite{DasTru-afl08}}}}{=}\MAT_{\fin}$};
\node (NIL) [above left=of FIN] {$\cTC(\NIL)$};
\node (DEF) [above=of NIL] {$\cTC(\DEF)$};
\node (SLT1) [above left=of COMB] {$\cTC(\SLT_1)$};
\node (Z2) [above=of COMB] {$\cTC(\REG_2^Z)$};
\node (d1) [left=of NIL] {};
\node (MON2) [left=of d1] {$\cTC(\MON_{{}\geq 2})\stackrel{\mbox{\footnotesize{\cite{DasStiTru-dcfs08-tcs}}}}{=}\ETOL$};
\node (COMM) [above=of MON2] {$\cTC(\COMM)\stackrel{\mbox{\footnotesize{\cite{DasTru-afl08}}}}{=}\MAT$};
\node (Z4) [above=of Z2] {$\cTC(\REG_4^Z)$};
\node (P2) [right=of COMB] {$\cTC(\RL_2^P)$};
\node (Pn) [above=of P2] {$\cTC(\RL_n^P)$};
\node (CS) [above right=of DEF] 
  {\parbox{9.7cm}{$\CS\stackrel{\mbox{\footnotesize{\cite{Pau79-tc}}}}{=}\cTC(\REG)\stackrel{\mbox{\footnotesize{\cite{DasTru-afl08}}}}{=}\cTC(\CIRC)\stackrel{\mbox{\footnotesize{\cite{DasTru-afl08}}}}{=}\cTC(\SUF)$\\
   $\phantom{\CS}\stackrel{\mbox{\footnotesize{\cite{DasTru-afl08}}}}{=}\cTC(\ORD)\stackrel{\mbox{\footnotesize{\cite{DasTru-afl08}}}}{=}\cTC(\NC)\stackrel{\mbox{\footnotesize{\cite{DasTru-afl08}}}}{=}\cTC(\PS)\stackrel{\mbox{\footnotesize{\cite{DasStiTru-dcfs08-tcs}}}}{=}\cTC(\REG_{{}\geq 5}^Z)$\\
   $\phantom{\CS}\stackrel{\mbox{\footnotesize{\ref{thm-uf-cs}}}}{=}\cTC(\UF)\stackrel{\mbox{\footnotesize{\ref{thm-slt2-cs}}}}{=}\cTC(\SLT_{{}\geq 2})\stackrel{\mbox{\footnotesize{\ref{thm-slt2-cs}}}}{=}\cTC(\SLT)\stackrel{\mbox{\footnotesize{\ref{thm-rlv-cs}}}}{=}\cTC(\RL_{{}\geq 1}^V)$}};
\draw[hier] (CF) %[bend right=29] 
  to node[edgeLabel,pos=.4]{\footnotesize{\cite{RS-Lsystems}}} 
  (Z1);
\draw[hier] (Z1) %[bend right=29] 
  to node[edgeLabel,pos=.7]{\footnotesize{\cite{DasStiTru-afl08-ijfcs}}} 
  (COMB);
\draw[hier] (P1) %[bend left=5] 
  to node[edgeLabel,pos=.4]{\footnotesize{\cite{handbook}}} 
  (CF);
\draw[hier] (P1) [bend right=29] 
  to node[edgeLabel]{\footnotesize{\ref{thm-p1-p2}}} 
  (P2);
\draw[hiero] (P2) %[bend right=29] 
  to %node[edgeLabel]{\footnotesize{\cite{HolTru15-ncma}}} 
  (Pn);
\draw[hier] (FIN) %[bend right=29] 
  to node[edgeLabel]{\footnotesize{\cite{DasTru-afl08}}} 
  (NIL);
\draw[hier] (Z1) [bend left=29] 
  to node[edgeLabel,pos=.3]{\footnotesize{\cite{DasTru-afl08}}} 
  (NIL);
\draw[hiero] (NIL) %[bend right=29] 
  to %node[edgeLabel]{\footnotesize{\cite{HolTru15-ncma}}} 
  (DEF);
\draw[hiero] (COMB) [bend left=15] 
  to %node[edgeLabel]{\footnotesize{\cite{HolTru15-ncma}}} 
  (DEF);
\draw[hiero] (COMB) %[bend right=29] 
  to %node[edgeLabel]{\footnotesize{\cite{HolTru15-ncma}}} 
  (SLT1);
\draw[hiero] (COMB) %[bend right=29] 
  to %node[edgeLabel]{\footnotesize{\cite{DasStiTru-dcfs08-tcs}}}
  (Z2);
\draw[hier] (FIN) %[bend right=29] 
  to node[edgeLabel]{\footnotesize{\cite{DasPau89}}} 
  (MON2);
\draw[hier] (Z1) [bend left=15] 
  to node[edgeLabel]{\footnotesize{\cite{RS-Lsystems}}} 
  (MON2);
\draw[hier] (MON2) %[bend right=29] 
  to node[edgeLabel,pos=.4]{\footnotesize{\cite{DasPau89}}} 
  (COMM);
\draw[hiero] (Z2) %[bend right=29] 
  to %node[edgeLabel]{\footnotesize{\cite{HolTru15-ncma}}} 
  (Z4);
\draw[hier] (MON2) %[bend right=29] 
  to node[edgeLabel,pos=.6]{\footnotesize{\cite{DasStiTru-dcfs08-tcs}}} 
  (Z4);
\draw[hier] (COMM) %[bend right=29] 
  to node[edgeLabel,pos=.4]{\footnotesize{\cite{DasPau89}}} 
  (CS);
\draw[hiero] (DEF) %[bend right=29] 
  to %node[edgeLabel]{\footnotesize{\cite{HolTru15-ncma}}} 
  (CS);
\draw[hiero] (SLT1) %[bend right=29] 
  to %node[edgeLabel]{\footnotesize{\cite{HolTru15-ncma}}} 
  (CS);
\draw[hiero] (Z4) %[bend right=29] 
  to %node[edgeLabel]{\footnotesize{\cite{HolTru15-ncma}}} 
  (CS);
\draw[hiero] (Pn) [bend right=25] 
  to %node[edgeLabel]{\footnotesize{\cite{HolTru15-ncma}}} 
  (CS);
%\node (REG) {$\REG$};
%\draw[hier] (UF) [bend right=29] to node[edgeLabel]{\footnotesize{\cite{HolTru15-ncma}}} 
%  (REG);
\end{tikzpicture}}}
\caption{New Hierarchy of subregularly tree-controlled language families}\label{fig-tc-hier-new}
\end{figure}

\section{Conclusion}

There are several families of languages generated by tree-controlled grammars where we do not have 
a characterization by some other language class. The strictness of some inclusions and the 
incomparability of some families remain as open problems.

In the present paper, we have only considered tree-controlled grammars without erasing rules. For 
tree-controlled grammars where erasing rules are allowed, several results have been published already
(see, e.\,g., \cite{DasStiTru-afl08-ijfcs, TurDasManSel12, Vasz12}). Also in this situation, 
there are some open problems.

Another direction for future research is to consider other subregular language families or to
relate the families of languages generated by tree-controlled grammars to language
families obtained by other grammars/systems with regulated rewriting.

\bibliographystyle{eptcs}
\bibliography{tc-subreg}

\end{document}